\documentclass[journal]{IEEEtran}

\usepackage{amsfonts,amsmath,amssymb}
\usepackage{dsfont}
\usepackage{cite}
\usepackage{graphicx}
 \usepackage{booktabs}
\usepackage{url}
\usepackage{gensymb}
\usepackage{caption}
\usepackage{bm}
\usepackage{bbding}
\usepackage{makecell}
\usepackage{amsmath}
\usepackage{enumerate}
\usepackage{multirow}
\usepackage[ruled,linesnumbered]{algorithm2e}
\usepackage{algorithmicx}
\usepackage{xcolor}
\usepackage{algpseudocode}
\usepackage{algcompatible}
\usepackage{subcaption}
\captionsetup{font={footnotesize}, labelsep=period}
\usepackage{booktabs}
\usepackage{array}
\usepackage{slashbox}
\usepackage{CJK}
\usepackage{esint} 

\usepackage{graphicx}
\usepackage{amsthm}

\newtheorem{lemma}{Lemma}
\newtheorem{proposition}{Proposition}

\newtheorem{remark}{Remark}

\newcommand{\figref}[1]{\figurename~\ref{#1}}

\DeclareMathOperator*{\argmin}{arg\,min}
\DeclareMathOperator{\diag}{diag}


\begin{document}
\title{Amplitude-Constrained Constellation and Reflection Pattern Designs for Directional Backscatter Communications Using Programmable Metasurface  }
\author{Wei~Wang,~\IEEEmembership{Member,~IEEE}, Bincheng~Zhu,~\IEEEmembership{Senior Member,~IEEE}, Yongming~Huang,~\IEEEmembership{Senior Member,~IEEE}, and~Wei~Zhang,~\IEEEmembership{Fellow,~IEEE}
\thanks{

W. Wang is with Peng Cheng Laboratory, Shenzhen 518066, China (e-mail: wei\_wang@ieee.org).

B. Zhu and Y. Huang are with School of Information Science and Engineering, Southeast University, Nanjing 210096, China, and also with the Purple Mountain Laboratory, Nanjing 211111, China (e-mail:zbc@seu.edu.cn; huangym@seu.edu.cn).

W. Zhang is with School of Electrical Engineering and Telecommunications, The University of New South Wales, Sydney, NSW 2052, Australia (e-mail: w.zhang@unsw.edu.au).

}
}
\maketitle

\begin{abstract}
The large scale reflector array of programmable metasurfaces is capable of increasing the power efficiency of backscatter communications via passive beamforming and thus has the potential to revolutionize the low-data-rate nature of backscatter communications. In this paper, we propose to design the power-efficient higher-order constellation and reflection pattern under the amplitude constraint brought by backscatter communications. For the constellation design, we adopt the amplitude and phase-shift keying (APSK) constellation and optimize the parameters of APSK such as ring number, ring radius, and inter-ring phase difference. Specifically, we derive closed-form solutions to the optimal ring radius and inter-ring phase difference for an arbitrary modulation order in the decomposed subproblems. For the reflection pattern design, we propose to optimize the passive beamforming vector by solving a multi-objective optimization problem that  maximizes reflection power and guarantees beam homogenization within the interested angle range. To solve the problem, we propose a constant-modulus power iteration method, which is proven to be monotonically increasing, to maximize the objective function in each iteration. Numerical results show that the proposed APSK constellation design and reflection pattern design outperform the existing modulation and beam pattern designs in programmable metasurface enabled backscatter communications.
\end{abstract}

\begin{keywords}
APSK, backscatter communications, constellation, programmable metasurfaces, reflection pattern
\end{keywords}

\section{Introduction}

The rapid growth of Internet of Things (IoT), driven by the development of ubiquitous computing, commodity sensors, and 5G mobile communications, is envisioned to forge a technological path into smart cities for human beings. A major challenge of IoT is the design of  energy-efficient and low-hardware-cost communication module \cite{liu2019next}. Backscatter communication is a communication technique that allows wireless system to transmit information without the aid of bulky and power-hungry radio frequency (RF) components on the transmitter \cite{BackScattTCOM}, which offers a solution to low-cost and energy-efficient wireless communications. Hence, backscatter communication is widely used in short range communication scenarios such as radio-frequency identification (RFID), and IoT sensors.

The recent progress of programmable metasurface, which is characterized by the capacity of tailoring electromagnetic waves, is revolutionizing the design paradigm of wireless communications  \cite{Tutorial, cui2010metamaterials,  yang2016programmable, wang2021joint, wang2022intelligent, zhang2021intelligent, DistributedIRS, vardakis2021intelligently, TangTWC, f2020perfect, hu2022irs}. In \cite{wang2021joint, wang2022intelligent, zhang2021intelligent, DistributedIRS}, the programmable metasurface, a.k.a. intelligent reflecting surface (IRS)/reconfigurable intelligent surface (RIS), is regarded as part of the wireless channel, which empowers human beings to proactively change radio propagation conditions.
Given that the reflecting element of IRS can be interpreted as an impedance-modulated antenna,  \cite{vardakis2021intelligently} shows that commercial RFID tags can be used as building blocks for a wirelessly-controlled and battery-free IRS implementation. In \cite{TangTWC}, free-space path loss models for programmable metasurfaces-assisted wireless communications, which  underpin the theoretical analysis of wireless performance boost from RIS, are developed and then corroborated both analytically and experimentally.
In \cite{f2020perfect}, a massive backscatter communication scheme based on the extreme sensitivity of the perfect absorption condition is implemented with a programmable metasurface in a rich-scattering environment to achieve physical layer security. In \cite{hu2022irs}, distributed semi-passive programmable metasurfaces are deployed to the design of an integrated sensing and communication system.

Other than being part of the radio propagation environment to assist wireless communication, programmable metasurfaces also play an important role in backscatter communications. The large-scale reflector array has enabled more diverse applications of backscatter communications by increasing power efficiency via passive beamforming. In  \cite{zhao2020metasurface}, the pioneering work has been done to validate the feasibility of backscatter communications using  large-scale programmable metasurfaces,  which are characterized by the large aperture and huge degrees of freedom. Specifically, a  secure ambient backscatter communication system that leverages existing commodity 2.4 GHz Wi-Fi signals is designed and implemented, and the prototype  achieves  the data rate on the order of hundreds of Kbps. In \cite{tang2019wireless,tang2020mimo, chen2022accurate}, metasurface-based backscatter communications with a dedicated single-tone sinusoidal source are investigated. For backscatter communications powered by a feed antenna, the information is encoded by modulating the incident single-tone sinusoidal wave through varying the impedance that determines the reflection coefficient \cite{BackScattTCOM, lazaro2020feasibility}.  With a pre-designed impedance set, backscatter modulation can be realized by choosing the impedance according to the input binary bits.  In \cite{tang2019wireless}, the prototype of programmable metasurface-based backscatter communication using quadrature phase shift keying (QPSK) is implemented and evaluated. In \cite{tang2020mimo, chen2022accurate},  the non-linear harmonic designs of high-order QAM modulations and multiple input multiple output (MIMO)  data transmissions using programmable metasurfaces are proposed. The proposed harmonic control is effective in tackling the coupling effects of the reflection amplitude and phase responses of an unit cell \cite{abeywickrama2020intelligent}, which is a promising technique to enable decoupled and flexible amplitude and phase control of metasurfaces. On the basis of the non-linear harmonic designs, the advanced designs of high-order modulation and beamforming techniques have become feasible to be implemented in promgrammable metasurfaces. However, as conventional backscatter communication suffers from short transmission range and low data rate \cite{liu2019next,thomas2010qam}, the energy efficient design of higher-order backscatter modulations attracts very limited research interests. Leveraging the large-scale reflector array of programmable metasurfaces, passive beamforming will significantly improve the effective radiation power of backscatter communications \cite{LISAliang}. Thus, the constellation design of higher-order backscatter modulation becomes essential. On the other hand, although beam pattern design has been extensively investigated for conventional MIMO system under different antenna structures, e.g., fully digital MIMO \cite{long2019window,cheng2021analytical}, analog MIMO \cite{xiao2016hierarchical,zhu20193}, and hybrid digital and analog MIMO  \cite{alkhateeb2014channel, wang2020optimal}. The applicability of the precedent designs to programmable metasurface enabled passive beamforming remains unexplored. It is noteworthy that the design constraint of programmable metasurface enabled backscatter communications is inherently different from conventional MIMO system. Firstly, programmable metasurface reflects power rather than generates power, thereby the reflection coefficient vector does not comply to the sum power constraint of beamforming vector in conventional MIMO beamforming. Secondly,  constrained by the electromagnetic properties of the metamaterials, the reflectivity of the metasurface units is constrained \cite{Tutorial}. With the given incident signal emitted by the feed antenna, the amplitude of the reflected signal is upper bounded as a result of the constrained reflectivity.

To improve the power efficiency of directional backscatter communications and provide design guidelines for the practical backscatter communication systems\cite{tang2019wireless, zhao2020metasurface}, we propose to design the constellation and the reflection pattern under amplitude constraint. We firstly decompose the design of constellation and reflection pattern into two sub-problems. With respect to constellation design, we let the constellation follow the form of amplitude and phase-shift keying (APSK) and then propose to optimize the parameters of APSK under amplitude constraint. With respect to reflection pattern design, we firstly analyze the reflected power efficiency and then investigate the comparability of off-the-shelves beam pattern designs under sum power constraint with programmable metasurface enabled passive beamforming. Based on the obtained analytical results, we propose to optimize the reflection pattern under  constant modulus constraint, which is harsher than amplitude constrain. The main contributions we have made in this paper are summarized as follows:
\begin{itemize}
\item Following the criterion of maximizing minimum Euclidean distance, we optimize ring number, ring radius, and phase difference of inter-ring constellation points of APSK. Specifically, we derive closed-form  solutions to the optimal ring radius and inter-ring phase difference for an arbitrary modulation order. The generated APSK constellation by our algorithm is superior to conventional QAM and PSK with respect to the minimum Euclidean distance under amplitude constraint.
\item We analyze the reflected power efficiency of programmable metasurface under amplitude constraint, and our analysis results show that the sum power of the reflected signals in all directions is proportional to the squared $\ell_2$ norm of the passive beamforming vector, which indicates that the maximum sum reflected power is achieved if and only if the passive beamforming vector follows constant modulus constraint.
\item For reflection pattern design, we propose a multi-objective optimization problem to  maximize reflection power and guarantee beam homogenization within the interested angle range. Specifically, we formulate a max-min optimization problem  under constant modulus constraint. A constant-modulus power iteration method, which is proven to be monotonically increasing, is proposed to optimize the objective function in each iteration. Through analyzing the ripple factor and power ratio of the generated beam pattern, we validate the effectiveness of our proposed design.
\end{itemize}
\noindent Numerical results show that the proposed APSK constellation design and reflection pattern design outperform the existing modulation schemes (e.g., QAM modulation) and the beam pattern designs under the sum power constraint for programmable metasurface enabled backscatter communications.

The rest of the paper is organized as follows. Section II introduces the system model. In Section III, we perform APSK constellation design under amplitude constraint. In Section IV,  we perform reflection pattern design under constant modulus constraint. In  Section V,  numerical results are presented. Finally, in Section VI, we draw the conclusion.


{\em{Notations:\quad}} Column vectors (matrices) are denoted by bold-face lower (upper) case letters,  $(\cdot)^T$ and $(\cdot)^{H}$  represent transpose and  conjugate transpose operation, respectively,  ${ \rm gcd}(a, b)$ stands for the greatest common divisor of the integers $a$ and $b$, and  ${\rm lcm}(a, b)$  stands for the least common multiple of integers $a$ and $b$.

\section{System Model}

In this section, we introduce the system model of signal constellation in programmable metasurface enabled directional backscatter communications.

\subsection{Amplitude Constraint of The Reflected Signal}

\begin{figure}[tp]{
\begin{center}{\includegraphics[ height=5cm]{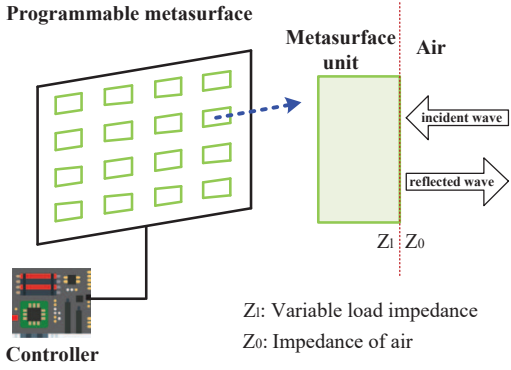}}
\caption{Illustration of programmable metasurface and its unit}\label{MetaUnit}
\end{center}}
\end{figure}

\figref{MetaUnit} shows the working mechanism of programmable metasurfaces. A dedicated incident signal $E_i(t)$ impinges on the metasurface, and the reflected signal ${E}_r(t)$ can be configured by the controller via tuning the load impedance. Following \cite{tang2019wireless, tang2020mimo},  the incident signal considered is emitted by a dedicated power source and is a single-tone sinusoidal\footnotemark, i.e., $E_i(t) = A \cos(2\pi f_c t + \varphi_0)$, and thus the reflected signal ${E}_r(t)$ is a single-tone carrier signal as well.

\footnotetext{In accordance with  \cite{tang2019wireless, tang2020mimo}, we assume that the programmable metasurface is placed in the far field of the horn antenna. Thus, the incident signal $E_i(t)$ is identical for all the metasurface units.
}

Applying phasor method in circuit analysis, we obtain the relationship between the incident signal  and the reflected signal  as follows
\begin{align}
{\mathbb{E}}_r  =   \mathbb{E}_i \cdot \Gamma(\omega)
=  A|\Gamma(\omega)| e^{j (\varphi_0 + \varphi_{\Gamma(\omega)})}  \label{RefEq}
\end{align}
where $\mathbb{E}_i  =  A e^{j\varphi_0 }$ is the phasor-domain representation (in exponential form) of the incident signal $E_i(t)$, $\mathbb{E}_r$ is the phasor-domain representation of the reflected signal $E_r(t)$, and  $\Gamma$ is the reflection coefficient that describes the fraction of the electromagnetic wave reflected by an impedance discontinuity in the transmission medium \cite{tang2019wireless,zhang2021wireless, hall2000high}, and $\Gamma(\omega) = |\Gamma(\omega)| e^{j \varphi_{\Gamma(\omega)}} $  denotes the exact value of the reflection coefficient when the angular frequency is $\omega = 2\pi f_c$. To be concise, $\omega$ will be omitted in the following context. According to \cite{zhu2013active, hall2000high, taflove2005computational, tang2019wireless}, the expression of the reflection coefficient  is given by
\begin{align}
\Gamma &= \frac{Z_l - Z_0}{Z_l + Z_0}
\end{align}
where $Z_l$ is the equivalent load impedance of metasurface unit, and $Z_0$ is the impedance of air. Through tuning $Z_l$,  the magnitude and phase of the reflection coefficient $\Gamma$ can be configured by the controller of programmable metasurface \cite{zhu2013active}, and the exact value of the load impedance  $Z_l$ can be derived via the well-known Smith chart.

Owing to the physical property of passive reflection, the reflection coefficient $\Gamma$ satisfies
\begin{align}
|\Gamma|\leq 1 \label{RefAmpCons}
\end{align}
 To explore the theoretical performance upper bound of the backscatter communications, we neglect the current hardware constraints \cite{abeywickrama2020intelligent, tang2020mimo} and  assume that the amplitude and the phase of the reflection coefficient can be tuned independently and continuously under the constraint \eqref{RefAmpCons}.
Thus, the magnitude of the reflected signal satisfies
\begin{align}
|{\mathbb{E}}_r | \leq A \label{AmpCons}
\end{align}

In the realm of telecommunications,  ${\mathbb{E}}_r$ is also referred to as equivalent baseband signal of the reflected signal ${E}_r(t)$. Thus, \eqref{AmpCons} indicates that the baseband signal of backscatter modulation is amplitude constrained, which is different from the power constraint imposed on the traditional communication systems.

\subsection{Backscatter Modulation}

Backscatter modulation is realized by changing the reflection coefficient according to the input information. In order to convey information, the reflection coefficient is configured as
\begin{align}
\Gamma(t) = \sum_{m=1}^M \Gamma[m] h(t - mT) \label{refcoeff}
\end{align}
where $\Gamma[m]$ is selected from a pre-designed finite alphabet, i.e., $\Gamma[m] \in \{I_1, I_2, \cdots, I_{|\mathcal{S}|} \} $, according to the input information bits, $h(t)$ is the rectangular pulse, and $T$ is symbol duration.

Thus, the constellation alphabet of the equivalent baseband signal ${\mathbb{E}}_r$  is given by
\begin{align}
\mathcal{S} = \{A e^{j  \varphi_0} I_1, A e^{j  \varphi_0}I_2, \cdots, A e^{j  \varphi_0}I_{|\mathcal{S}|}\}
\end{align}
According to \eqref{AmpCons}, the constellation of programmable metasurface enabled backscatter communications is under an amplitude constraint, i.e.,
\begin{align}
\max_{s \in \mathcal{S}} |s| \leq A \label{Const1}
\end{align}

Although the amplitude of the reflected signal can be changed by controlling the transmit power of the horn antenna to obey the conventional power constraint,  this structure requires (1) the strict synchronization between the programmable metasurface and the horn antenna, (2) the amplitude modulation capacity of the horn antenna, and (3) the linear wideband power amplifier for amplitude modulation at the horn antenna side. Also, the amplitude of all the metasurface units can only be controlled collectively. In a word, amplitude control at horn antenna side is not cost-effective and does not fully utilize the advantages of programmable metasurfaces. Thus, we follow the structure proposed by  \cite{tang2019wireless, tang2020mimo} in our work and let the horn antenna simply provide an incident pure carrier signal.

\subsection{Passive Beamforming}

A salient advantage of programmable metasurfaces over the conventional backscatter antennas is its capability of directional beamforming, which is  brought by the large number of metasurface units.  Concatenating the incident and reflected signal of the $N$ metasurface units, we have the vector-form representation of \eqref{RefEq} as
\begin{align}
{\mathbf{E}}_r  =   \mathbb{E}_i \cdot \mathbf{f}  \label{PassBeam}
\end{align}
where $\mathbf{f}$ is the reflection coefficient vector, and, according to \eqref{RefAmpCons}, $\mathbf{f}$  follows the amplitude constraint, i.e.,
\begin{align}
\mathbf{f}(i) \leq 1, \;\forall n \in \{1, .., N\} \label{ConstF}
\end{align}
When $\mathbf{f} = \mathbf{1}$, the reflection  pattern is omni-directional, which is apparently energy inefficient.   Since the intended users of backscatter communications are usually distributed in a constrained area, e.g., a plaza, a street block, a road section, the ideal reflection pattern should be concentrated and homogeneous within the interested area.

We assume that the metasurface units are arranged as an $N_x \times N_y$ rectangular planar array, and the array response vector of programmable metasurface is represented as
\begin{align}
\mathbf{v}(\Psi_x, \Psi_y) = \mathbf{v}(\Psi_x) \otimes  \mathbf{v}(\Psi_y) \label{SteeringVB}
\end{align}
 and
\begin{subequations}
\begin{align}
&\quad  \mathbf{v}(\Psi_x) =  \left[1,\; e^{j  \pi \Psi_x},\; \cdots,\; e^{j  (N_{x} - 1)\pi \Psi_x} \right]^T \label{UPA1}\\
 &\quad  \mathbf{v}(\Psi_y)  = \left[1,\; e^{j  \pi \Psi_y },\; \cdots,\; e^{j  (N_{y} - 1)\pi \Psi_y } \right]^T \label{UPA2}
\end{align}
\end{subequations}
where $\Psi_x  $ and $\Psi_y$ are cosine of the angle of arrival/depature (AoA/AoD), a.k.a. direction cosines, \cite{tsai2018millimeter, wang2021jittering}. To maximize the reflected signal power within the interested angle range and guarantee uniform signal strength, i.e., beam homogenization, in the meantime, we propose the following design criteria.

\noindent\emph{\textbf{ Criterion 1}}: Maximized reflected power over the intended angle range, namely
\begin{align}
   \max_{\mathbf{f}} P_{\mathcal{D}_{\Psi}}
\end{align}

\emph{\textbf{Criterion 2}}: Minimized the ripple factor within the intended angle range  $[\theta_{L},\; \theta_{U})$, namely
\begin{align}
\min_{\mathbf{f}}  \frac{V_{Ripple}}{V_{Mean}}
\end{align}
where
\begin{subequations}
\begin{align}
 & P_{\mathcal{D}_{\Psi}}  =  \oiint_{\mathcal{D}_{\Psi}}  \mathbf{f}^H \mathbf{v}(\Psi_x, \Psi_y) \mathbf{v}^H(\Psi_x, \Psi_y)\mathbf{f} \; d \Psi_x d \Psi_y  \label{Subeq1} \\
 & V_{Mean} = \frac{1}{\mathcal{A}(\mathcal{D}_{\Psi})}  \oiint_{\mathcal{D}_{\Psi}}  |\mathbf{v}^H(\Psi_x, \Psi_y)\mathbf{f}| \;  d \Psi_x d \Psi_y \label{Subeq3} \\
 &  V_{Ripple} \notag \\
& = \sqrt{\frac{1}{\mathcal{A}(\mathcal{D}_{\Psi})}  \oiint_{\mathcal{D}_{\Psi}} \left(|\mathbf{v}^H(\Psi_x, \Psi_y) \mathbf{f}| - V_{Mean} \right)^2 \;  d \Psi_x d \Psi_y} \label{Subeq4}
\end{align}
\end{subequations}
and $P_{\mathcal{D}_{\Psi}}$ is the power of the reflected signal within the intended angle range $\mathcal{D}_{\Psi}$,  $V_{Mean} $ is the mean voltage of the reflected signal  over the intended angle range  $\mathcal{D}_{\Psi}$,  $V_{Ripple}$ is root mean square (RMS) of the ripple voltage,  $\frac{V_{Ripple}}{V_{Mean}}$ is the ripple factor that measures the degree of fluctuations over the intended angle range  $\mathcal{D}_{\Psi}$, and $\mathcal{A}(\mathcal{D}_{\Psi})$ is the area of $\mathcal{D}_{\Psi}$.

\begin{figure}[tp]{
\begin{center}{\includegraphics[ height=4cm]{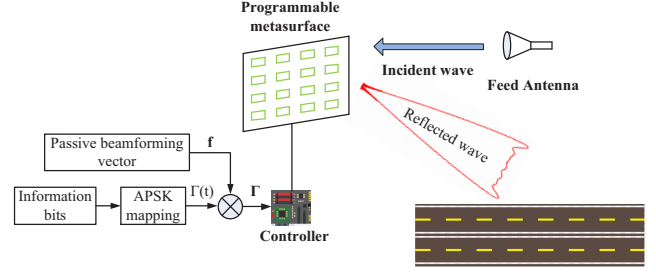}}
\caption{Programmable metasurface enabled directional backscatter modulator  with a road section as the intended area} \label{SystemModel}
\end{center}}
\end{figure}

\subsection{Directional Backscatter Communications}
\figref{SystemModel} shows the working mechanism of signal modulation in programmable metasurface enabled backscatter communications.  A single-tone carrier signal $E_i(t)$ impinges on the programmable metasurface from a feed antenna. Signal modulation is realized by collectively changing the reflection coefficients  of programmable metasurface according to the incoming information bits, and passive beamforming is realized by imposing beamforming vector on the metasurface units.
Thus, the reflection coefficient of directional backscatter communications can be represented as the product of the information-bearing factor $\Gamma(t)$ in \eqref{refcoeff} and the passive beamforming factor $\mathbf{f}$ (refer to \eqref{PassBeam}), i.e.,
\begin{align}
\boldsymbol{\Gamma} = \Gamma(t) \mathbf{f}
\end{align}
Note that $\Gamma(t)$, which is decided by the incoming information bits, is time-variant, and $\mathbf{f}$, which determines the reflection pattern, is pre-designed and time-invariant.  For example, when a programmable metasurface is deployed on a building, and its intended receivers are the vehicles, the reflection pattern is designed to cover the road section in front of the building as shown in \figref{SystemModel}.

When the constellation of backscatter communications satisfies the amplitude constraint in \eqref{Const1} and the passive beamforming vector $\mathbf{f}$ satisfies the amplitude constraint in \eqref{ConstF}, the synthesized $\boldsymbol{\Gamma} $ will meet the amplitude constraint and thus can be readily applied to realize directional backscatter communications. Similar to the conventional MIMO with precoding/beamforming \cite{LinearPrecoding, MassLinearPrecoding}, where the design of transmit signals fed to multiple antennas is disentangled into the constellation design and the precoding/beamforming vector design,
the optimization of the reflection coefficient vector $\boldsymbol{\Gamma}$ can be also decomposed into two independent sub-problems of constellation design and passive beamforming design, given that the amplitude constraint is satisfied by the two sub-problems.

To summarize, in this paper, we will carry out (I) the design of the constellation alphabet $\mathcal{S}$ under the amplitude constraint \eqref{Const1} and (II) the design of the reflection pattern (namely, the passive beamforming vector $\mathbf{f}$) under the amplitude constraint \eqref{ConstF} to optimize the performance of directional backscatter communications using programmable metasurface.

\section{APSK Signal Constellation Design Under Amplitude Constraint}
In this section, we resolve the optimization problem of APSK constellation design under amplitude constraint.

\subsection{Amplitude Phase Shift Keying }
Owing to its design flexibility,  APSK has been widely used in practical communications systems, e.g., DVB-S2 \cite{DVB2}, and MIMO precoding design under constant envelope constraint \cite{ Zhang7516597} to provide a power and spectral efficient solution. APSK conveys information through changing both the amplitude and the phase of the carrier signal. APSK with different parameters can be used to represent PSK and QAM. Thus,  APSK can be regarded as a unified modulation scheme. In this paper, we propose to optimize the constellation parameters for APSK in programmable metasurface enabled backscatter communications.

\begin{figure}[tp]
\begin{minipage}[!h]{0.48\linewidth}
\centering
\includegraphics[ width=0.7\textwidth]{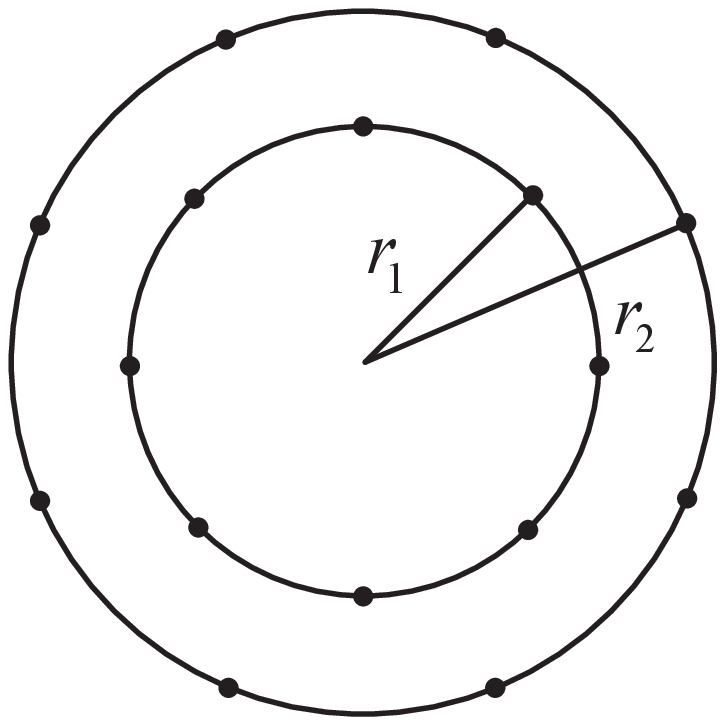}
\subcaption{$N_1=8$, $N_2=8$  }
\label{APSKex1}
\end{minipage}
\begin{minipage}[!h]{0.48\linewidth}
\centering
\hspace{-.63cm}\includegraphics[width=0.7\textwidth]{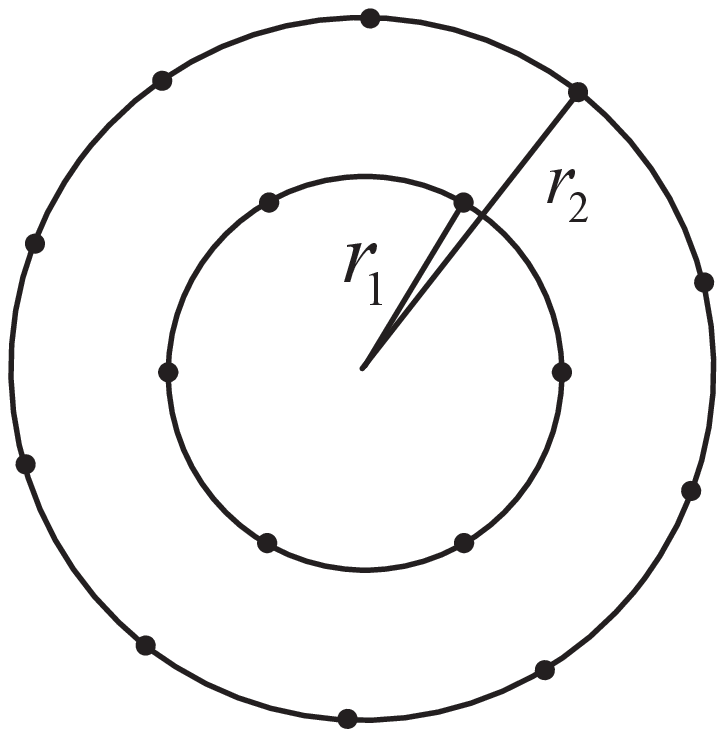}
\subcaption{$N_1=6$, $N_2=10$  \quad \quad}
\label{APSKex2}
\end{minipage}
\caption{Constellation diagrams of the $2$-ring APSK}
\label{APSKex}
\end{figure}

The constellation points of APSK are distributed in two or more concentric rings. For example, $2$-ring 16-APSK constellation with $N_1=8$ points in the outer ring and $N_2=8$ points in the inner ring  is shown in \figref{APSKex1} and $2$-ring 16-APSK constellation with $N_1=10$ points in the outer ring and $N_2=6$ points in the inner ring is shown in \figref{APSKex2}. In a general case, consider an $L$-ring APSK constellation, the elements of which are represented as
\begin{align}
s = r_l e^{j(\frac{2\pi k_l}{N_l} + \omega_l)}
\end{align}
where $l \in \{1,\cdots,L\}$ is the index of the ring,  $L$ is the number of the rings, $k_l \in \{0,\cdots,N_l-1\} $ is the index of the constellation points in the $l$-th ring, $N_l$ is the number of constellation points in the $l$-th ring,  $r_l$ is the radius of the $l$-th ring,  and $\omega_l \in [0, \frac{2\pi}{N_l})$ is the phase of the reference constellation point (i.e., $k_l=0$) in the $l$-th ring. For APSK, the amplitude constraint \eqref{Const1} is rewritten as
\begin{align}
 r_L = \max_{s \in \mathcal{S}} |s|  \leq A \label{Const2}
\end{align}

The performance of signal constellation is dependent on the minimum distance between any pairs of constellation points \cite{wang2016signal}, i.e.,
\begin{align}
d_{min} \triangleq \min_{\substack{s_i, s_j \in \mathcal{S} \\ s_i\neq s_j} }  \left| s_i - s_j \right|  \label{OptFunc}
\end{align}

Thus, combining \eqref{Const2} and \eqref{OptFunc}, the design criterion of  APSK constellation in programmable metasurface enabled backscatter communications is formulated as
\begin{align} \label{Opt0}
\mathrm{P1}: \; \left\{\begin{array} {l}
\max \limits_{\mathcal{S}}   \quad d_{min}  \\
\;  s.t.  \;\; \;\; \; r_L \leq A \\
\;  \;\;  \; \;\; \;\; \; \; \; r_1< r_2<\cdots  < r_L \\
\;  \;\;  \; \;\; \;\; \; \; \; \sum_{l=1}^L N_l=|\mathcal{S}|
\end{array}
\right.
\end{align}

\subsection{The Minimum Euclidean Distance for APSK}
For APSK, the minimum Euclidean distance $d_{min}$ in \eqref{OptFunc} can be further represented as
\begin{align} \label{amin}
d_{min}=\min  \Big\{ &\big \{d_{min\_intra}(l),\; l \in \{1,\cdots,L \} \big\}\cup \notag \\
 & \big\{d_{min\_inter}(l, \hat{l}),\; l \neq \hat{l} \in \{1,\cdots,L\} \big\}  \Big\}
\end{align}
where
\begin{align}
d_{min\_intra}(l) =  \sqrt{2r_l^2- 2r_l^2\cos \frac{2 \pi}{N_l}} \label{dminA}
\end{align}
is the intra-ring minimum Euclidean distance of the $l$-th ring, and
\begin{align}
d_{min\_inter}(l, \hat{l}) = \sqrt{r_l^2 +   r_{\hat{l}}^2 - 2 r_l r_{\hat{l}}\cos  \phi_{l, \hat{l}} } \label{dminB}
\end{align}
is the inter-ring minimum Euclidean distance between the constellation points in the $l$-th ring and  the constellation points  in the $\hat{l}$-th ring, where
\begin{align} \label{angle}
  \phi_{l, \hat{l}}  = \arccos \; \left( \max_{k_l, k_{\hat{l}}} \; \cos \Big(2 \pi \big(\frac{ k_l}{N_l}-  \frac{  k_{\hat{l}} }{N_{\hat{l}}}\big) + \omega_l - \omega_{\hat{l}} \Big) \right) \notag \\
  k_l \in \{0,\cdots,N_l-1\},\; k_{\hat{l}} \in \{0,\cdots,N_{\hat{l}}-1\}
\end{align}
is the minimum phase difference between the constellation points in the two rings. 

\subsection{Decomposition of the Constellation Design}
With \eqref{dminA} and \eqref{dminB}, P1 is rewritten by
\begin{align}
\mathrm{P2}: \; \left\{\begin{array} {l}
\max \limits_{L, \{{r}_l \},  \{\omega_l \}, \{N_{l}\}}    d_{min}  \\
 \;\; \;\; \;\; \;  s.t.  \;\;\;\;   r_L \leq A \\
 \;\; \;\; \;\; \;\;\;  \;\; \;\; \; \;\;    {r}_1< {r}_2  <\cdots <  {r}_L \\
 \;\; \;\; \;\; \;\;\;  \;\; \;\; \; \;\;    d_{min\_intra}(l) \geq d_{min}, \; l \in \{1,\cdots, L \}  \\
  \;\; \;\; \;\; \;\;\;  \;\; \;\; \; \;\; d_{min\_inter}(l, \hat{l}) \geq d_{min},   l \neq \hat{l} \in \{1,\cdots, L \} \\
  \;\; \;\; \;\; \;\;\;  \;\; \;\; \; \;\; \sum_{l=1}^L N_l=|\mathcal{S}| \notag
\end{array}
\right.
\end{align}
where the parameters $L, \{{r}_l \},  \{\omega_l \}, \{N_{l}\} $ are the variables to be optimized. As the ranges of the integer variables  $L, \{N_{l}\} $ are typically small, P2 can be resolved by firstly optimizing $\{r_l \},  \{\omega_l \}$ with a  given set of $L, \{N_{l}\}$, and then exhaustively searching over the feasible region of $L, \{N_{l}\} $. In addition, as the outermost ring corresponds to the best reflectivity, from an energy-greedy perspective, the optimal radius of the outermost ring is apparently $r_L^* = A$. The constraint of inter-ring distance is indeed a solvable linear constraint. However, the constraint of intra-ring distance is non-trivial, which can be written in quadratic form as follows
\begin{align} \label{IntraConst}
\left[
  \begin{array}{cc}
    r_l & r_{\hat{l}} \\
  \end{array}
\right]
 \left[
   \begin{array}{cc}
     1 & -\cos\phi_{l,\hat{l}} \\
     -\cos\phi_{l,\hat{l}} & 1 \\
   \end{array}
 \right]
\left[
  \begin{array}{c}
    r_l \\
    r_{\hat{l}} \\
  \end{array}
\right] \geq d_{min}^2
\end{align}
Obviously, it is  a non-convex  quadratic constraint w.r.t. $[r_l, r_{\hat{l}}]^T$ that renders P2 challenging.

To make P2 tractable, we make the assumption that the minimum Euclidean distance is
the assumptions that  \\
(1) when $N_1 \geq 2$, the minimum Euclidean distance is
\begin{align}
d_{min}= d_{min\_intra}(1) = \sqrt{2 - 2 \cos \frac{2 \pi}{N_1}} r_1 \label{Dmin1}
\end{align}
(2) when $N_1 = 1$ (i.e., there is a constellation point in the center of the ring),  the minimum Euclidean distance is
\begin{align}
d_{min}&= \min\{d_{min\_intra}(2), d_{min\_inter}(1,2)\} \notag \\
&= \min\left\{\sqrt{2 - 2 \cos \frac{2 \pi}{N_1}} r_2, r_2 \right \} \label{Dmin2}
\end{align}

In addition, we can drop the constraint $r_L \leq A$ to obtain a set of intermediate radius parameters ${\hat{r}_l}$, and  then normalize  the intermediate radius parameters as
\begin{align}
r_l = A \frac{\hat{r}_l}{\hat{r}_L}, \;\; \forall l \in \{1, \cdots, L \} \label{Norm}
\end{align}
to meet the constraint. Without loss of generality, we can set $\hat{r}_1 = 1$ (or $\hat{r}_2=1$ when $N_1=1$).

Combining \eqref{Dmin1}, \eqref{Dmin2} with \eqref{Norm}, the objective of maximizing $d_{min}$ is reduced to minimizing $\hat{r}_L$. Thus, with a given set of $L$ and $\{N_l\}$,  the optimization problem can be represented as
\begin{align}
\mathrm{P3}: \; \left\{\begin{array} {l}
\min \limits_{\{\hat{r}_l \},  \{\omega_l \} }  \;\;\;\;\;  \hat{r}_L  \\
 \;\; \;\; \;  s.t.  \;\;\;\;  \hat{r}_1< \hat{r}_2  <\cdots < \hat{r}_L \\
 \;\; \;\; \;\;\;  \;\; \;\; \; \;\;    d_{min\_intra}(l) \geq d_{min}, \; l \in \{1,\cdots, L \}  \\
 \;\; \;\; \;\;\;  \;\; \;\; \; \;\; d_{min\_inter}(l, \hat{l}) \geq d_{min},   l \neq \hat{l} \in \{1,\cdots, L \}  \\
\;\; \;\; \;\;\;  \;\; \;\; \; \;\;   \eqref{Dmin1} \;\; {\rm or}  \;\; \eqref{Dmin2} \notag
\end{array}
\right.
\end{align}

By decomposing P3 into $L-1$ subproblems, i.e.,
\begin{align}
\mathrm{P4}: \; \left\{
\begin{array} {l}
\min \limits_{ \hat{r}_{l+1} ,   \omega_{l+1} }  \;\;\;\;    \hat{r}_{l+1}    \\
\;\; \;\; \;  s.t.  \;\;\;\; \hat{r}_{l+1} > \hat{r}_{l}   \\
\;\; \;\; \;\;\;  \;\; \;\; \; \;\;   {d}_{min\_intra}(l+1)   \geq {d}_{min}  \\
\;\; \;\; \;\;\;  \;\; \;\; \; \;\;  {d}_{min\_inter}(l, l+1) \geq {d}_{min}   \\
 \;\; \;\; \;\;\;  \;\; \;\; \; \;\;  \eqref{Dmin1} \;\; {\rm or}  \;\; \eqref{Dmin2} \notag
\end{array}
\right.
\end{align}
we can solve the problem in a recursive manner starting from the innermost ring ($l=1$) to the outermost ring ($l=L$).  Note that the inter-ring Euclidean distance constraint is relaxed by neglecting ${d}_{min\_inter}(\hat{l}, l+1) \geq {d}_{min}, \; \forall \hat{l} <l $, as the inter-ring Euclidean distance between adjacent rings, i.e., ${d}_{min\_inter}(l, l+1) $,  is usually smaller than the inter-ring Euclidean distance between non-adjacent rings, i.e., ${d}_{min\_inter}(\hat{l}, l+1)$, where $\hat{l} < l$.

\subsection{Solutions to P4}

According to P4,  $\omega_{l+1}$  merely relates to the inter-ring Euclidean distance, while $\hat{r}^*_{l+1} $ is related to both the inter-ring Euclidean distance and the intra-ring Euclidean distance. Thus, we will optimize $\omega_{l+1}$ and $\hat{r}_{l+1} $, successively.

\subsubsection{Optimization of the phase shift $\omega_{l+1}$}
According to \eqref{angle}, $ \phi_{l, l+1 }$ is independent of $r_{l}$  and $r_{l+1}$. Thus, we will firstly maximize $ \phi_{{l, l+1} }$ through optimizing the phase shift $\omega_{l+1}$, which is equivalently to optimize $\Delta\omega_{l, l+1} \triangleq \omega_{l+1} - \omega_{l}$.
\begin{align}
 \max_{\Delta\omega_{l, l+1}}  \phi_{l, l+1}
\end{align}
The analytical expression of the optimal $\Delta\omega_{l, l+1}^*$  and its corresponding $\phi^*_{l, l+1}$ are given in the following proposition.

\begin{proposition} \rm
The optimal phase difference between two adjacent rings is
\begin{align}
\Delta\omega^*_{l, l+1}= \frac{ (1+2\nu) \pi }{{\rm lcm}(N_l, N_{l+1})} , \quad {\rm \nu\; is \; an \; integer}
\end{align}
and the corresponding minimum angle is
\begin{align}
\phi_{l, l+1}({ \Delta\omega_{l, l+1}^* }) = \frac{\pi}{{\rm lcm}(N_l, N_{l+1})}
\end{align}
\end{proposition}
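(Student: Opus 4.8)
The plan is to analyze the quantity inside the $\arccos$ in \eqref{angle}, namely
$\max_{k_l, k_{l+1}} \cos\bigl(2\pi(\tfrac{k_l}{N_l} - \tfrac{k_{l+1}}{N_{l+1}}) + \Delta\omega_{l,l+1}\bigr)$,
and show that the best we can do to keep this maximum \emph{small} (which makes $\phi_{l,l+1}$ \emph{large}) is governed by how finely the set of achievable phase offsets $\{2\pi(\tfrac{k_l}{N_l} - \tfrac{k_{l+1}}{N_{l+1}}) : 0 \le k_l < N_l,\ 0 \le k_{l+1} < N_{l+1}\}$ tiles the circle. First I would observe that as $k_l$ and $k_{l+1}$ range over their index sets, the difference $\tfrac{k_l}{N_l} - \tfrac{k_{l+1}}{N_{l+1}}$ takes values in $\tfrac{1}{{\rm lcm}(N_l,N_{l+1})}\mathbb{Z}$, and in fact — by a Bézout/$\gcd$ argument — hits \emph{every} residue class modulo $1$ of the form $\tfrac{m}{{\rm lcm}(N_l,N_{l+1})}$ for $m \in \mathbb{Z}$. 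Writing $L_0 = {\rm lcm}(N_l, N_{l+1})$, the set of attainable angles $2\pi(\tfrac{k_l}{N_l} - \tfrac{k_{l+1}}{N_{l+1}}) \bmod 2\pi$ is therefore exactly the equally spaced grid $\{0, \tfrac{2\pi}{L_0}, \tfrac{4\pi}{L_0}, \dots, \tfrac{2\pi(L_0-1)}{L_0}\}$.

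Next I would reduce the optimization: after adding the common shift $\Delta\omega_{l,l+1}$, the set of arguments of the cosine becomes the grid above translated by $\Delta\omega_{l,l+1}$, and we want to choose the translation so that the grid point \emph{closest to $0$ modulo $2\pi$} is as far from $0$ as possible (since $\cos$ is maximized near $0$ and the cosine of the argument nearest $0$ dominates the max). The spacing of the grid is $\tfrac{2\pi}{L_0}$, so the closest grid point to $0$ can be pushed out to at most half the spacing, i.e.\ to $\pm\tfrac{\pi}{L_0}$; this optimum is attained precisely when $\Delta\omega_{l,l+1}$ places a grid point exactly at $\tfrac{\pi}{L_0}$ (equivalently at $-\tfrac{\pi}{L_0}$), which forces $\Delta\omega_{l,l+1} \equiv \tfrac{\pi}{L_0} \pmod{\tfrac{2\pi}{L_0}}$, i.e.\ $\Delta\omega^*_{l,l+1} = \tfrac{(1+2\nu)\pi}{L_0}$ for integer $\nu$. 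At that optimum the argument of smallest absolute value is $\pm\tfrac{\pi}{L_0}$, so the max of the cosine is $\cos\tfrac{\pi}{L_0}$, and hence $\phi_{l,l+1} = \arccos(\cos\tfrac{\pi}{L_0}) = \tfrac{\pi}{L_0} = \tfrac{\pi}{{\rm lcm}(N_l, N_{l+1})}$, as claimed. I would also note the feasibility caveat that $\omega_{l+1}$ is restricted to $[0, \tfrac{2\pi}{N_{l+1}})$, so strictly one should check that some representative $\nu$ yields a $\Delta\omega^*_{l,l+1}$ with $\omega_{l+1} = \omega_l + \Delta\omega^*_{l,l+1}$ landing in the allowed interval; since the admissible $\Delta\omega$ values are spaced $\tfrac{2\pi}{L_0} \le \tfrac{2\pi}{N_{l+1}}$ apart, at least one does.

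The main obstacle I anticipate is the first step — rigorously establishing that the difference set $\{\tfrac{k_l}{N_l} - \tfrac{k_{l+1}}{N_{l+1}} \bmod 1\}$ is \emph{exactly} the uniform grid of size $L_0$ and not some sparser subset. The containment in $\tfrac{1}{L_0}\mathbb{Z}/\mathbb{Z}$ is immediate from clearing denominators, but surjectivity onto all of $\tfrac{1}{L_0}\mathbb{Z}/\mathbb{Z}$ requires an argument: write $\tfrac{k_l}{N_l} - \tfrac{k_{l+1}}{N_{l+1}} = \tfrac{k_l N_{l+1}' - k_{l+1} N_l'}{L_0}$ where $N_l' = L_0/N_l$ and $N_{l+1}' = L_0/N_{l+1}$, note $\gcd(N_l', N_{l+1}') = 1$ (a standard ${\rm lcm}/\gcd$ identity), and invoke Bézout to realize any target numerator modulo $L_0$ by a suitable choice of $k_l, k_{l+1}$ in range. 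Everything after that is elementary: monotonicity of $\cos$ on $[0,\pi]$, a "closest-point-on-a-lattice" packing argument, and plugging into $\arccos$. I would present the $\gcd$/Bézout lemma first, then the lattice-covering reduction, then the two-line conclusion.
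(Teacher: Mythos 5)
Your proof is correct and follows essentially the same route as the paper's: both use a B\'ezout/linear-Diophantine argument to show that the achievable inter-ring phase offsets form the uniform grid of spacing $2\pi/{\rm lcm}(N_l,N_{l+1})$ on the circle, and both then choose $\Delta\omega_{l,l+1}$ so that the grid point nearest zero is pushed out to half the spacing, giving $\Delta\omega^*_{l,l+1}=(1+2\nu)\pi/{\rm lcm}(N_l,N_{l+1})$ and $\phi^*_{l,l+1}=\pi/{\rm lcm}(N_l,N_{l+1})$. The only nit is a notational swap in your numerator (with $N_l'=L_0/N_l$ it should read $k_l N_l'-k_{l+1}N_{l+1}'$), which does not affect the argument.
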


\begin{proof}
See Appendix A
\end{proof}

\subsubsection{Optimization of the radius $\hat{r}_{l+1}$}
According to the constraints  the inter-ring Euclidean distance constraint and the intra-ring Euclidean distance constraint of P4, we derive the range of $r_{l+1} $ as
\begin{subequations} \label{Conds}
\begin{align}
\hat{r}_{l+1}   &\geq \underbrace{\sqrt{\frac{\hat{d}^2_{min}}{2-2\cos\frac{2\pi}{N_{l+1}}} }}_{B_1} \label{Cond1}\\
\hat{r}_{l+1}   &\geq \underbrace{\hat{r}_l \cos \phi^*_{l, l+1}+\sqrt{\hat{r}_l^2 \cos^2(\phi^*_{l, l+1}) - \hat{r}_l^2+ \hat{d}^2_{min}}}_{B_2} \label{Cond2}
\end{align}
\end{subequations}
where the optimal $\phi^*_{l, l+1}$ is obtained by setting $\omega^*_{l+1} = \omega_l + \Delta\omega^*_{l, l+1}$.
Apparently,  the minimum radius is given by
\begin{align} \label{OptRadius}
\hat{r}_{l+1}^* = \max\{B_1, B_2\}
\end{align}

\subsection{The Algorithm for APSK Constellation Construction Under Amplitude Constraint}

With the solutions to P4, parameters of the optimal APSK can be obtained by exhaustively searching over the feasible set of $L$ and $\{N_l\}$. To narrow down the search range, we add the constraint $N_1 \leq N_2 \leq \cdots \leq N_L $ on $\{N_l\}$. To summarize, the procedures of our proposed APSK constellation design are summarized in Algorithm 1.

\begin{algorithm}[ht]

    \caption{Construction of APSK constellation for programmable metasurface enabled backscatter communications}

    \begin{algorithmic}[1]
        \STATEx  \textbf{Input}: Modulation order $M$, modulation order of  PSK in the $1$-st ring $N_1$.
        \STATEx {\textbf{Step 1.}} Find all the possible combinations of $\{N_1, \cdots, N_L \}$ and $L$.

        \STATEx  {\textbf{Step 2.}} For the $k$-th feasible combinations of $\{N_1, \cdots, N_L \}^{(k)}$ and $L^{(k)}$
          \STATEx
             \quad (1) According to Proposition 1 and \eqref{OptRadius}, find the optimal
                 phase shifts $\{\omega_1, \cdots, \omega_L\}^{(k)}$
                 and the optimal intermediate radii  $\{\hat{r}_1, \cdots, \hat{r}_L\}^{(k)}$.
         \STATEx
           \quad (2) Normalize the intermediate radii $\{\hat{r}_1, \cdots, \hat{r}_L\}^{(k)}$, and obtain
            \begin{align}
                r_l^{(k)} = A \frac{\hat{r}_l^{(k)}}{\hat{r}_L^{(k)}}, \;\; \forall l \in \{1, \cdots, L \}
            \end{align}
                  \STATEx
           \quad (3) Record the minimum Euclidean distance $d_{min}^{(k)}$;
         \STATEx
           \quad Go to (1), until all combinations of $\{N_1, \cdots, N_L \}$ and $L$ are exhaustively explored.
            \STATEx  {\textbf{Step 3.}}  Select the combination of $\{N_1, \cdots, N_L \}^{(k)}$ and $L^{(k)}$ which corresponds to the maximum $d_{min}^{(k)}$, and output the corresponding $\{{r}_1, \cdots, {r}_L\}^{(k)}$ and $\{\omega_1, \cdots, \omega_L\}^{(k)}$
    \end{algorithmic}
\end{algorithm}

\begin{remark} {\rm
As the closed-form solution for the subproblem P4 has already been derived, the complexity of the proposed APSK constellation design mainly stems from the exhaustive search for P4 conditioned on $L$ and $\{N_1, N_2, \cdots, N_L\}$ over the feasible set of $L$ and $\{N_1, N_2, \cdots, N_L\}$. Applying the constraint $N_1 \leq N_2... \leq N_L$, the search range of the candidate $\{N_1, N_2, \cdots,N_L\}$ is greatly reduced. In addition, as the constellation is designed off-line and the derived constellation coefficients are pre-stored for impendence selection, the complexity of APSK constellation design is not a critical issue for the real-time backscatter communication.}
\end{remark}

\section{Reflection Pattern Design}
In this section, we firstly analyse the reflection power efficiency of programmable metasurface enabled backscatter communications and then carry out reflection pattern design.

\subsection{Analysis of Reflection Power under Amplitude Constraint}

We assume that the intended angle range is $\mathcal{D}_{\Psi} = [\Psi_x^L, \Psi_x^U) \times [\Psi_y^L, \Psi_y^U)$, and thus the reflected power over $\mathcal{D}_{\Psi}$ is  represented as
\begin{align}
 P_{\mathcal{D}_{\Psi}}  &=  \oiint_{\mathcal{D}_{\Psi}}  \mathbf{f}^H \mathbf{v}(\Psi_x, \Psi_y) \mathbf{v}^H(\Psi_x, \Psi_y)\mathbf{f} \; d \Psi_x d \Psi_y  \notag \\
 &=   \mathbf{f}^H \underbrace{\left( \int_{\Psi_x^L}^{\Psi_x^U} \int_{\Psi_y^L}^{\Psi_y^U}  \mathbf{v}(\Psi_x, \Psi_y) \mathbf{v}^H(\Psi_x, \Psi_y) \; d \Psi_x d \Psi_y \right)}_{\mathbf{V}_{\mathcal{D}_{\Psi}}} \mathbf{f}
\end{align}
According to the mixed-product property of Kronecker product, the term $ \mathbf{V}_{\mathcal{D}_{\Psi}}$ can be represented as
\begin{align}
&\quad \mathbf{V}_{\mathcal{D}_{\Psi}} =  \notag \\
& \underbrace{\left( \int_{\Psi_x^L}^{\Psi_x^U} \mathbf{v}(\Psi_x)\mathbf{v}^H(\Psi_x) \; d \Psi_x  \right)}_{\mathbf{V}_{\Psi_x}} \otimes    \underbrace{\left( \int_{\Psi_y^L}^{\Psi_y^U}  \mathbf{v}(\Psi_y)  \mathbf{v}^H(\Psi_y) \; d \Psi_y \right)}_{\mathbf{V}_{\Psi_y}}
\end{align}
With respect to  $\mathbf{V}_{\Psi_x}$, its $(\ell, \kappa)$-th entry is
\begin{align}
\mathbf{V}_{\Psi_x}(\ell, \kappa) & =  \int^{\Psi_x^U}_{\Psi_x^L}   e^{j (\ell-1) \pi \Psi_x}  e^{ - j (\kappa-1) \pi \Psi_x} d \Psi_x  \notag \\
& = \frac{e^{j (\ell-\kappa) \pi \Psi_x}}{j (\ell-\kappa) \pi }
\bigg|^{\Psi_x^U}_{\Psi_x^L} \notag
\end{align}
which can be further represented as
\begin{align} \label{Vxmatrix}
\mathbf{V}_{\Psi_x}(\ell, \kappa) = \left\{ \begin{array}{c}
                                    \frac{e^{j (\ell-\kappa) \Psi_x^U \pi  }}{j (\ell-\kappa) \pi }  - \frac{e^{j (\ell-\kappa) \Psi_x^L \pi  }}{j (\ell-\kappa) \pi },\;\; \;\;\ell\neq \kappa \\
                                    \Psi_x^U - \Psi_x^L, \;\; \;\;\;\; \;\;\ell=\kappa
                                  \end{array}
 \right.
\end{align}
Similarly, we derive the expression of the $(\ell, \kappa)$-th entry of $\mathbf{V}_{\Psi_y}$ as
\begin{align} \label{Vymatrix}
\mathbf{V}_{\Psi_y}(\ell, \kappa) = \left\{ \begin{array}{c}
                                    \frac{e^{j (\ell-\kappa) \Psi_y^U \pi  }}{j (\ell-\kappa) \pi }  - \frac{e^{j (\ell-\kappa) \Psi_y^L \pi  }}{j (\ell-\kappa) \pi },\;\; \;\;\ell\neq \kappa \\
                                    \Psi_y^U - \Psi_y^L, \;\; \;\;\;\; \;\;\ell=\kappa
                                  \end{array}
 \right.
\end{align}

\begin{proposition} {\rm
The sum power of the reflected signals in all directions ( i.e., $\mathcal{D}_{\Psi} =  [-1, 1) \times [-1, 1)$) is proportional to the squared $\ell_2$ norm of $\mathbf{f}$, i.e., $  E_{[-1, 1) \times [-1, 1)} \varpropto \mathbf{f}^H \mathbf{f}$. }
\end{proposition}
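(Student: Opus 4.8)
The plan is to evaluate the Gram-type matrix $\mathbf{V}_{\mathcal{D}_{\Psi}}$ explicitly for the full angular domain and show that it collapses to a scaled identity. Since $P_{\mathcal{D}_{\Psi}} = \mathbf{f}^H \mathbf{V}_{\mathcal{D}_{\Psi}} \mathbf{f}$ and $\mathbf{V}_{\mathcal{D}_{\Psi}} = \mathbf{V}_{\Psi_x} \otimes \mathbf{V}_{\Psi_y}$, it suffices to compute the one-dimensional factors $\mathbf{V}_{\Psi_x}$ and $\mathbf{V}_{\Psi_y}$ over the interval $[-1, 1)$.

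First I would substitute $\Psi_x^L = -1$ and $\Psi_x^U = 1$ into \eqref{Vxmatrix}. For the off-diagonal entries ($\ell \neq \kappa$), this gives
\[
\mathbf{V}_{\Psi_x}(\ell, \kappa) = \frac{e^{j(\ell-\kappa)\pi} - e^{-j(\ell-\kappa)\pi}}{j(\ell-\kappa)\pi} = \frac{2\sin\big((\ell-\kappa)\pi\big)}{(\ell-\kappa)\pi} = 0,
\]
because $\ell - \kappa$ is a nonzero integer and hence $\sin\big((\ell-\kappa)\pi\big) = 0$; the diagonal entries equal $\Psi_x^U - \Psi_x^L = 2$. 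Therefore $\mathbf{V}_{\Psi_x} = 2\,\mathbf{I}_{N_x}$, and the identical computation applied to \eqref{Vymatrix} yields $\mathbf{V}_{\Psi_y} = 2\,\mathbf{I}_{N_y}$.

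I would then combine the two factors through the Kronecker product identity $(a\,\mathbf{I}_{N_x}) \otimes (b\,\mathbf{I}_{N_y}) = ab\,\mathbf{I}_{N_x N_y}$, which gives $\mathbf{V}_{\mathcal{D}_{\Psi}} = 4\,\mathbf{I}_N$ with $N = N_x N_y$ the total number of metasurface units. Substituting back,
\[
E_{[-1,1)\times[-1,1)} = \mathbf{f}^H \big(4\,\mathbf{I}_N\big)\mathbf{f} = 4\,\mathbf{f}^H\mathbf{f},
\]
which establishes the claimed proportionality $E_{[-1,1)\times[-1,1)} \varpropto \mathbf{f}^H \mathbf{f}$ with constant of proportionality $4$.

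The argument is essentially a direct computation and I do not expect a genuine obstacle. The only point deserving care is the observation that the off-diagonal terms vanish precisely because the integration limits $\pm 1$ turn the exponents into integer multiples of $\pi$; for any narrower angular window $\mathbf{V}_{\mathcal{D}_{\Psi}}$ would retain nonzero off-diagonal entries and would not reduce to a multiple of the identity. It is also worth noting explicitly that $\mathbf{f} \in \mathbb{C}^N$ with $N = N_x N_y$, so $4\,\mathbf{I}_N$ is conformable with $\mathbf{f}$ and the identity is dimensionally consistent.
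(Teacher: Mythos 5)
Your proof is correct and follows essentially the same route as the paper's Appendix~B: substituting $\Psi^L=-1$, $\Psi^U=1$ into \eqref{Vxmatrix} and \eqref{Vymatrix}, observing that the off-diagonal entries vanish since $\sin((\ell-\kappa)\pi)=0$, concluding $\mathbf{V}_{\Psi_x}=2\mathbf{I}_{N_x}$ and $\mathbf{V}_{\Psi_y}=2\mathbf{I}_{N_y}$, and forming the Kronecker product to get $\mathbf{V}_{\mathcal{D}_{\Psi}}=4\mathbf{I}_{N_xN_y}$ and hence $P_{\mathcal{D}_{\Psi}}=4\,\mathbf{f}^H\mathbf{f}$. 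No substantive differences.
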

\begin{proof}
See Appendix B
\end{proof}

\begin{remark}
{\rm The maximum power of the radiated signal in all directions for conventional beamforming can be achieved by simply normalizing the beamforming vector as $\mathbf{f}^H \mathbf{f} = N$ under sum power constraint, while passive beamforming under amplitude constraint has to be of constant modulus, i.e., $|\mathbf{f}(i)| = 1, \; \forall n \in \{1, \cdots, N \}$, to achieve the maximum reflection power in all directions. }
\end{remark}

\subsection{Compatibility of Conventional Beam Pattern Designs Under Sum Power Constraint With Programmable Metasurface Enabled Passive Beamforming}
We firstly review conventional beamforming techniques under sum power constraint, including fully digital beamforming, hybrid beamforming with multiple radio frequency (RF) chains, and hybrid beamforming with a single RF chain, and then discuss their compatibility with programmable metasurface enabled passive beamforming.

\vspace{0.08cm}
\emph{1) Fully digital beamforming }
\vspace{0.08cm}

\textbf{Review:} Beam pattern design for fully digital beamforming has been well investigated in array signal processing \cite{long2019window,cheng2021analytical}. Similar to the design of finite impulse response (FIR) filters \cite{FIRls}\cite{ConstLS},  window-based method \cite{long2019window} and least-square (LS) (or constrained least-square (CLS)) method \cite{cheng2021analytical, alkhateeb2014channel} can be readily applied to beam pattern designs. To accommodate the hardware structure, the length of the response needs to be equal to the number of array elements.

\textbf{Compatibility:} Beamforming vectors, including fully digital case, can meet the amplitude constraint of programmable metasurface enabled passive beamforming through normalizing $\mathbf{f}$ as follows
\begin{align}
\mathbf{f} \leftarrow \frac{\mathbf{f}}{\max_n |\mathbf{f}(n)|} \label{Normalization}
\end{align}
The normalized beamforming vector satisfies $\mathbf{f}^H \mathbf{f} = \frac{N}{\max_n |\mathbf{f}(n)|^2}$. According to Proposition 2, the sum power of the reflected signals is restricted by $\max_n |\mathbf{f}(n)|$, and thus a very large $\max_n |\mathbf{f}(n)|$ will result in a power-inefficient reflection pattern.

\vspace{0.08cm}
\emph{2) Hybrid beamforming with multiple RF chains}
\vspace{0.08cm}

\textbf{Review:} Hybrid beamforming vector needs to be compatible with the hybrid digital and analog hardware structure. The key idea of beam pattern design for hybrid beamforming with multiple RF chains is to regenerate the reference beamforming vector, which is derived in fully digital case, under the hardware constraint. In \cite{alkhateeb2014channel}, the reference beam pattern is firstly obtained through LS method and then approximated using orthogonal matching pursuit (OMP) algorithm. 
In \cite{wang2020optimal}, the reference beam pattern is firstly obtained through semidefinite relaxation (SDR) technique and then perfectly regenerated through vector decomposition.

\textbf{Compatibility:} As hybrid beamforming vector is an approximation of the desired digital beamforming vector (i.e., reference beamforming vector), its extension to reflection pattern design is the same as fully digital case.

\vspace{0.08cm}
\emph{3) Analog beamforming with a single RF chain}
\vspace{0.08cm}

\textbf{Review:} Analog beamforming design with a single RF chain is performed under constant modulus constraint, which is brought by the analog phase shifter network. In \cite{xiao2016hierarchical,zhu20193}, subarray based beamforming designs are carried out for single-RF-chain array antenna. Through set partition, the element antennas are grouped as subarrays, and then subarrays with different directions are combined to synthesize beam patterns with different beamwidths. However,  beamwidth of the subarray based method is confined to a few discrete values, which inevitably introduces a mismatch between the desired beamwidth and the actual beamwidth. In addition,  half of the array elements have to be deactivated to attain some beamwidths using subarray based method.

\textbf{Compatibility:} The element of analog beamforming vector is either of constant modulus or zero-valued (deactivated). Analog beamforming vector can be readily applied to passive beamforming. The subarray based method \cite{xiao2016hierarchical,zhu20193} generates two types of analog beamforming vectors. For Type 1, where all the analog phase shifters are activated, the beamforming vector satisfies $\mathbf{f}^H \mathbf{f} = N$; for Type 2, where half of the analog phase shifters are deactivated, the beamforming vector satisfies $\mathbf{f}^H \mathbf{f} = \frac{N}{2}$. Type 1 attains the maximum achievable reflection power, while Type 2 attains only half of the maximum achievable reflection power.

\begin{remark}{\rm
Although the aforementioned off-the-shelf methods can be readily applied to passive beamforming after the normalization operation of \eqref{Normalization}, the cost of the inefficient reflection power is prohibitively expensive for backscatter communications.
}
\end{remark}

\subsection{Reflection Pattern Design Under Amplitude Constraint}
To fully exploit the reflectivity of programmable metasurface and achieve the maximum reflection power, we tighten the amplitude constraint $ |\mathbf{f}(i)| \leq 1, \;\forall n \in \{1, .., N\}$ and incorporate the constant modulus constraint $ |\mathbf{f}(i)| = 1, \;\forall n \in \{1, .., N\} $ to reflection pattern design.

In addition, the reflection pattern design also aims to achieve the following two objectives.

\underline{Objective 1:}  Maximize the sum power of reflected signal over the intended angle range
\begin{align}
&\max \limits_{\mathbf{f}} \; \mathbf{f}^H \mathbf{V}_{\mathcal{D}_{\Psi} }  \mathbf{f}  \notag\\
&\;\;s.t.  \;\; |\mathbf{f}(i)| = 1, \;\forall n \in \{1, .., N\} \notag
\end{align}

\underline{Objective 2:} Maximize the minimum power of reflected signal within the intended angle range

\begin{align}
&\max \limits_{\mathbf{f}} \; \min \limits_{ (\Psi_x, \Psi_y) \in \mathcal{G} } \quad \mathbf{f}^H \mathbf{v}(\Psi_x, \Psi_y) \mathbf{v}^H(\Psi_x, \Psi_y) \mathbf{f}  \notag \\
&\qquad s.t.\;  \qquad \qquad   |\mathbf{f}(i)| =  1, \;\forall n \in \{1, .., N\} \notag
\end{align}
where
\begin{align}
&\mathcal{G} \triangleq \left\{(\Psi_x, \Psi_y) \Big| \; \Psi_x = \Psi_x^L +    \frac{2}{N_x} n_x, \Psi_y = \Psi_y^L + \frac{2}{N_y} n_y, \right.\notag \\
&\left. n_x = 0, \cdots, \big \lfloor \frac{\Psi_x^U - \Psi_x^L}{2/N_x} \big \rfloor -1, n_y = 0, \cdots, \big\lfloor \frac{\Psi_y^U - \Psi_y^L}{2/N_y} \big\rfloor -1 \right\}  \notag
\end{align}
is the discrete grid of $(\Psi_x, \Psi_y)$ over the intended range $[\Psi_x^L, \Psi_x^U) \times [\Psi_y^L, \Psi_y^U)$ with the grid size $(\frac{2}{N_x}, \frac{2}{N_y})$.

Note that Objective 1  and Objective 2 correspond to the design criterion 1 and design criterion 2 in Section II. C, respectively.  To resolve the above multi-objective optimization problem, we apply the weighted sum method \cite{marler2010weighted}. Specifically, we introduce a hyper-parameter $\alpha$ to combine  Objective 1 and Objective 2 and formulate the new research problem as follows
\begin{align}
\mathrm{P5}: \; \left\{
\begin{array} {l}
\max\limits_{\mathbf{f}} \; \min\limits_{ (\Psi_x, \Psi_y) \in \mathcal{G} } \quad \mathbf{f}^H  \mathbf{M}_{\Psi_x, \Psi_y} \mathbf{f}  \\
 \;\; \;\; \;  s.t. \qquad \;\;\;\; |\mathbf{f}(i)| =  1, \;\forall n \in \{1, .., N\}
\end{array}
\right.
\end{align}
where  $\mathbf{M}_{\Psi_x, \Psi_y} \triangleq \mathbf{v}(\Psi_x, \Psi_y) \mathbf{v}^H(\Psi_x, \Psi_y) + \alpha \mathbf{V}_{\mathcal{D}_{\Psi} }$.

\vspace{0.1cm}
For the angle range $\mathcal{D}_{\Psi} = [\Psi_x^L, \Psi_x^U) \times [\Psi_y^L, \Psi_y^U)$,  P5 can be broken down into the subproblems w.r.t. $\Psi_x$ and $\Psi_y$. W.r.t. $\Psi_x$, the design problem is given as
\begin{align}
\mathrm{P6}: \; \left\{
\begin{array} {l}
\max\limits_{\mathbf{f}_x} \; \min\limits_{  \Psi_x \in \mathcal{G}_x } \quad \mathbf{f}_x^H  \mathbf{M}_{\Psi_x} \mathbf{f}_x  \\
 \;\; \;\; \;  s.t. \qquad  \;\;\;\; |\mathbf{f}_x (i)| =  1, \;\forall n_x \in \{1, .., N_x\}
\end{array}
\right.
\end{align}
W.r.t $\Psi_y$, the component vector $\mathbf{f}_y$ can be obtained in the similar way. Then, the beamforming vector can be derived as $\mathbf{f} = \mathbf{f}_x \otimes \mathbf{f}_y$.

\subsection{Solution to P6}
We firstly focus on maximizing a specific term $ \mathbf{f}_x^H  \mathbf{M}_{\Psi_x} \mathbf{f}_x $ and then extend the method to the max-min problem \cite{wang2019spatial}.

\subsubsection{Constant-Modulus Power Iteration Method (CMPIM) to Maximize $\mathbf{f}_x^H  \mathbf{M}_{\Psi_x} \mathbf{f}_x $}
The term $ \mathbf{f}_x^H  \mathbf{M}_{\Psi_x} \mathbf{f}_x $ can be maximized through the following iterative process
\begin{subequations} \label{Iteration}
\begin{align}
    \mathbf{f}_{x, temp}^{(i)} & = \mathbf{f}_x^{(i)} +  \delta    \mathbf{M}_{\Psi_x}  \mathbf{f}_x^{(i)} \label{Step1a}\\
    {\mathbf{f}_x^{(i+1)}} & = \mathbf{D} \mathbf{f}_{x, temp}^{(i)} \label{Step1b}
\end{align}
\end{subequations}
where $\delta  \in (0, \infty] $ is the step size, and
\begin{align}
\mathbf{D}=  \diag \left\{ \frac{1}{|\mathbf{f}_{x, temp}^{(i)}(1)|},\;\cdots, \;   \frac{1}{|\mathbf{f}_{x, temp}^{(i)}(N_x)|} \right\}  \notag
\end{align}

\begin{proposition} {\rm
The term  $ \mathbf{f}_x^H  \mathbf{M}_{\Psi_x} \mathbf{f}_x $ in constant-modulus power iteration method (namely, \eqref{Iteration}) is monotonically increasing, i.e.,
\begin{align}
    {\mathbf{f}_x^{(i+1)}}^H \mathbf{M}_{\Psi_x} \mathbf{f}_x^{(i+1)}  \geq {\mathbf{f}_x^{(i)}}^H \mathbf{M}_{\Psi_x}  {\mathbf{f}_x^{(i)}}, \;\; \forall \delta \in (0, \infty]
\end{align}
and the equality holds if and only if $\mathbf{f}_x^{(i+1)} = \mathbf{f}_x^{(i)}$. }
\end{proposition}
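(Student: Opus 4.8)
The plan is to read the update \eqref{Iteration} as a gradient‑ascent step followed by a projection onto the constant‑modulus manifold, and to prove the ascent property for the auxiliary quadratic form $h(\mathbf{f}) \triangleq \mathbf{f}^H \mathbf{A}\mathbf{f}$ with $\mathbf{A} \triangleq \mathbf{I} + \delta \mathbf{M}_{\Psi_x}$ (writing $\mathbf{f}$ for $\mathbf{f}_x$ and $\mathbf{M}$ for $\mathbf{M}_{\Psi_x}$ to lighten notation; for $\delta = \infty$ one takes $h(\mathbf{f}) = \mathbf{f}^H \mathbf{M}\mathbf{f}$ directly). Two observations make this reduction exact and useful. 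First, $\mathbf{M}$ inherits the ``rank‑one plus $\alpha\ge 0$ times a Gram‑type matrix'' structure of $\mathbf{M}_{\Psi_x,\Psi_y}$ through the Kronecker decomposition of Section IV.A, with $\mathbf{V}_{\Psi_x} = \int \mathbf{v}(\Psi_x)\mathbf{v}^H(\Psi_x)\,d\Psi_x \succeq 0$; hence $\mathbf{M}\succeq 0$, $\mathbf{A}\succ 0$, and $h$ is a strictly convex quadratic form. Second, every iterate is constant‑modulus by construction (each pass through $\mathbf{D}$ normalizes the entries), so $\|\mathbf{f}^{(i)}\|^2 = N_x$ for all $i$, and on the feasible set $h(\mathbf{f}) = N_x + \delta\,\mathbf{f}^H\mathbf{M}\mathbf{f}$; therefore proving $h(\mathbf{f}^{(i+1)}) \ge h(\mathbf{f}^{(i)})$ proves the claim.

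First I would record the step compactly as $\mathbf{y} \triangleq \mathbf{f}_{x,\mathrm{temp}}^{(i)} = \mathbf{A}\mathbf{f}^{(i)}$ and $\mathbf{f}^{(i+1)} = \mathbf{D}\mathbf{y}$, so (assuming no entry of $\mathbf{y}$ vanishes — a degenerate tie which can be resolved by any fixed convention) $\mathbf{f}^{(i+1)}(n) = \mathbf{y}(n)/|\mathbf{y}(n)|$. The key projection lemma is that for any unit‑modulus vector $\mathbf{u}$ one has $\mathrm{Re}\{\mathbf{u}^H\mathbf{y}\} = \sum_n \mathrm{Re}\{\overline{\mathbf{u}(n)}\,\mathbf{y}(n)\} \le \sum_n |\mathbf{y}(n)| = \mathrm{Re}\{(\mathbf{f}^{(i+1)})^H\mathbf{y}\}$, with equality iff $\mathbf{u} = \mathbf{f}^{(i+1)}$. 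Applying it to $\mathbf{u} = \mathbf{f}^{(i)}$ and using $\mathbf{y} = \mathbf{A}\mathbf{f}^{(i)}$ together with $\mathrm{Re}\{(\mathbf{f}^{(i)})^H\mathbf{y}\} = h(\mathbf{f}^{(i)})$ gives $h(\mathbf{f}^{(i)}) \le \mathrm{Re}\{(\mathbf{f}^{(i+1)})^H\mathbf{y}\}$.

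Next I would invoke the supporting‑hyperplane (first‑order) inequality for the convex quadratic $h$: $h(\mathbf{f}) \ge 2\,\mathrm{Re}\{\mathbf{f}^H\mathbf{A}\mathbf{g}\} - h(\mathbf{g})$ for all $\mathbf{f},\mathbf{g}$, which is simply $(\mathbf{f}-\mathbf{g})^H\mathbf{A}(\mathbf{f}-\mathbf{g}) \ge 0$, tight iff $\mathbf{f} = \mathbf{g}$ since $\mathbf{A}\succ 0$. Taking $\mathbf{g} = \mathbf{f}^{(i)}$, $\mathbf{f} = \mathbf{f}^{(i+1)}$ and substituting $\mathbf{A}\mathbf{f}^{(i)} = \mathbf{y}$ yields $h(\mathbf{f}^{(i+1)}) \ge 2\,\mathrm{Re}\{(\mathbf{f}^{(i+1)})^H\mathbf{y}\} - h(\mathbf{f}^{(i)}) \ge 2 h(\mathbf{f}^{(i)}) - h(\mathbf{f}^{(i)}) = h(\mathbf{f}^{(i)})$, where the second step is the projection bound just derived. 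Peeling off the constant $N_x$ gives $(\mathbf{f}^{(i+1)})^H\mathbf{M}\mathbf{f}^{(i+1)} \ge (\mathbf{f}^{(i)})^H\mathbf{M}\mathbf{f}^{(i)}$. For the equality case, equality of the two sides forces both the projection bound and the supporting‑hyperplane bound to be tight; the latter already gives $\mathbf{f}^{(i+1)} = \mathbf{f}^{(i)}$ by strict convexity, and conversely $\mathbf{f}^{(i+1)} = \mathbf{f}^{(i)}$ trivially gives equality.

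I expect the only real friction points to be housekeeping rather than substance: confirming that $h$ and $\mathbf{f}^H\mathbf{M}\mathbf{f}$ really do differ by the fixed constant $N_x$ at every iterate (which relies on $\mathbf{D}$ producing exactly unit‑modulus entries, hence on $\mathbf{y}$ having no zero entry), and checking that the $\delta = \infty$ limit goes through verbatim with $h(\mathbf{f}) = \mathbf{f}^H\mathbf{M}\mathbf{f}$ — where strict convexity must instead be argued from $\mathbf{V}_{\Psi_x}\succ 0$ (distinct complex exponentials being linearly independent over an interval of positive length) when $\alpha > 0$. Everything else is the short, standard combination of the projection inequality with the convexity inequality, so the step I would be most careful to state precisely is the intersection of their equality conditions, which pins the equality case to $\mathbf{f}_x^{(i+1)} = \mathbf{f}_x^{(i)}$.
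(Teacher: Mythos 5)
Your proof is correct, and it closes the argument by a genuinely different route than the paper. Both proofs hinge on the same ``phase projection'' fact — that $\mathbf{f}_x^{(i+1)}$ maximizes $\mathrm{Re}\{\mathbf{u}^H \mathbf{f}_{x,temp}^{(i)}\}$ over unit-modulus $\mathbf{u}$, with optimal value $\|\mathbf{f}_{x,temp}^{(i)}\|_1$ — but they diverge afterward. The paper keeps the matrix $\mathbf{M}_{\Psi_x}$ throughout: it first lower-bounds the cross term $|{\mathbf{f}_x^{(i+1)}}^H\mathbf{M}_{\Psi_x}\mathbf{f}_x^{(i)}|$ by ${\mathbf{f}_x^{(i)}}^H\mathbf{M}_{\Psi_x}\mathbf{f}_x^{(i)}$ via the shift $\mathbf{M}_{\Psi_x}+\frac{1}{\delta}\mathbf{I}$ and the triangle inequality, and then transfers this to the diagonal term through the Cauchy--Schwarz inequality in the $\mathbf{M}_{\Psi_x}$-semi-inner product. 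You instead absorb the shift into $\mathbf{A}=\mathbf{I}+\delta\mathbf{M}_{\Psi_x}$, observe that $\mathbf{f}^H\mathbf{A}\mathbf{f}$ and $\mathbf{f}^H\mathbf{M}_{\Psi_x}\mathbf{f}$ differ by the fixed constant $N_x$ on the constant-modulus set, and close with the first-order convexity (supporting-hyperplane) inequality $(\mathbf{f}-\mathbf{g})^H\mathbf{A}(\mathbf{f}-\mathbf{g})\geq 0$ — a minorize--maximize argument. Your route buys a cleaner equality case (strict positive definiteness of $\mathbf{A}$ for finite $\delta$ immediately forces $\mathbf{f}_x^{(i+1)}=\mathbf{f}_x^{(i)}$, whereas the paper must track when each of its chained inequalities is tight) and dispenses with the absolute-value manipulations; the paper's route works directly on the original quadratic form without invoking the constancy of $\|\mathbf{f}_x\|^2$ on the feasible set. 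The caveats you flag — nonvanishing entries of $\mathbf{f}_{x,temp}^{(i)}$ so that $\mathbf{D}$ is well defined, and the $\delta=\infty$ limit where strictness of the convexity bound (hence the ``only if'' of the equality claim) requires $\mathbf{M}_{\Psi_x}\succ 0$ — are genuine edge cases that the paper's proof also glosses over, so they do not constitute a gap relative to the published argument.
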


\begin{proof}
See Appendix C.
\end{proof}

\begin{lemma} {\rm
The constant-modulus power iteration method of \eqref{Iteration} is convergent. }
\end{lemma}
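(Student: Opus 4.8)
The plan is to combine the monotonicity result of Proposition~3 with a boundedness argument to invoke the monotone convergence theorem on the scalar sequence $\{{\mathbf{f}_x^{(i)}}^H \mathbf{M}_{\Psi_x} \mathbf{f}_x^{(i)}\}$. First I would note that every iterate $\mathbf{f}_x^{(i)}$ produced by \eqref{Iteration} has constant modulus entries by construction of $\mathbf{D}$, so $\mathbf{f}_x^{(i)}$ lives on the compact set $\{\mathbf{f} : |\mathbf{f}(n)| = 1,\ \forall n\}$. Consequently the quadratic form ${\mathbf{f}_x^{(i)}}^H \mathbf{M}_{\Psi_x} \mathbf{f}_x^{(i)}$ is bounded above by $\lambda_{\max}(\mathbf{M}_{\Psi_x})\, N_x$ (and below by $0$, since $\mathbf{M}_{\Psi_x}$ is a sum of positive semidefinite matrices). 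By Proposition~3 this scalar sequence is nondecreasing, hence it converges to some finite limit $\mu^\star$.

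Next I would upgrade convergence of the objective values to convergence of the iterates. Since $\{\mathbf{f}_x^{(i)}\}$ lies in a compact set, it has a convergent subsequence $\mathbf{f}_x^{(i_k)} \to \mathbf{f}_x^\star$; the map in \eqref{Iteration} is continuous wherever $\mathbf{f}_{x,temp}$ has no zero entry, so along this subsequence ${\mathbf{f}_x^{(i_k+1)}}^H \mathbf{M}_{\Psi_x} \mathbf{f}_x^{(i_k+1)} \to {\mathbf{f}_x^\star}'^{H} \mathbf{M}_{\Psi_x} \mathbf{f}_x^{\star\prime}$ where $\mathbf{f}_x^{\star\prime}$ is the image of $\mathbf{f}_x^\star$ under one step of the iteration. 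But both sides equal $\mu^\star$ by the convergence of the full objective sequence, so the equality case of Proposition~3 forces $\mathbf{f}_x^{\star\prime} = \mathbf{f}_x^\star$, i.e.\ $\mathbf{f}_x^\star$ is a fixed point. A standard argument then shows the whole sequence converges (or at least that all limit points are fixed points with the same objective value $\mu^\star$), which is what is meant by convergence here.

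The main obstacle I anticipate is the well-definedness of the iteration map, namely ensuring $\mathbf{f}_{x,temp}^{(i)}(n) \neq 0$ for all $n$ and all $i$, so that $\mathbf{D}$ exists and the map is continuous at every iterate. I would handle this by observing that $\mathbf{M}_{\Psi_x}$ contains the term $\mathbf{v}(\Psi_x)\mathbf{v}^H(\Psi_x) + \alpha \mathbf{V}_{\Psi_x}$ whose diagonal entries of $\mathbf{V}_{\Psi_x}$ equal $\Psi_x^U - \Psi_x^L > 0$; for $\delta$ chosen appropriately (e.g.\ any $\delta > 0$ with the positive-semidefinite structure, or by noting $\mathbf{I} + \delta \mathbf{M}_{\Psi_x}$ has strictly positive diagonal and the update is $(\mathbf{I}+\delta\mathbf{M}_{\Psi_x})\mathbf{f}_x^{(i)}$ applied to a unit-modulus vector) one can argue the $n$-th entry of $\mathbf{f}_{x,temp}^{(i)}$ stays bounded away from zero, uniformly in $i$. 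If a fully rigorous uniform bound is awkward, the fallback is to take $\delta \to \infty$ in \eqref{Step1a}--\eqref{Step1b}, in which limit the normalization is simply $\mathbf{f}_x^{(i+1)} = \mathbf{D}\,\mathbf{M}_{\Psi_x}\mathbf{f}_x^{(i)}$ (the phase-extraction of a power-method step), and degeneracy occurs only on a measure-zero set of initializations that can be excluded. With well-definedness secured, the monotone-plus-bounded argument closes the proof.
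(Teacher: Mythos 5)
Your first paragraph is essentially the paper's own proof: Proposition~3 gives monotonicity of the scalar sequence ${\mathbf{f}_x^{(i)}}^H \mathbf{M}_{\Psi_x} \mathbf{f}_x^{(i)}$, the constant-modulus constraint bounds it above by $N_x\lambda_{\max}(\mathbf{M}_{\Psi_x})$, and the monotone convergence theorem closes the argument --- which is exactly how the paper interprets ``convergent'' (convergence of the objective values, not of the iterates). Your second and third paragraphs (limit points as fixed points, well-definedness of $\mathbf{D}$) go beyond what the paper proves and are a reasonable strengthening, though the step ``a standard argument then shows the whole sequence converges'' is not actually standard without further assumptions; your hedged fallback to ``all limit points are fixed points with objective value $\mu^\star$'' is the defensible version.
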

\begin{proof}
According to monotone convergence theorem, a monotone and bounded sequence is convergent. In Proposition 3, ${\mathbf{f}_x^{(i)}}^H \mathbf{M}_{\Psi_x}  {\mathbf{f}_x^{(i)}}$ is proven to be monotonically increasing.
It is also easy to find that $\mathbf{f}_x^H  \mathbf{M}_{\Psi_x} \mathbf{f}_x$ is upper bounded by $N_x \lambda_{M,1}$, where $\lambda_{M,1}$ is the largest eigenvalue of $\mathbf{M}_{\Psi_x}$.  Then, we can conclude that the  constant-modulus power iteration method is convergent.
\end{proof}

\begin{remark}{\rm
The ascent rate is controlled by the step size $\delta$. When $\delta  \rightarrow \infty$,  constant-modulus power iteration method is merely different from the classical power iteration method \cite{watkins2004fundamentals} in vector normalization. Constant-modulus power iteration method applies amplitude normalization (namely, \eqref{Step1b}), while the classical power iteration method  applies power normalization.  
}
\end{remark}

\subsubsection{Algorithm to Resolve P6}
On the basis of constant-modulus power iteration method, P6 can be resolved through iteratively updating $\mathbf{f}_x$ to increase the value of the minimum term $\min\limits_{\Psi_x \in \mathcal{G}_x } \quad \mathbf{f}_x^H  \mathbf{M}_{\Psi_x} \mathbf{f}_x$ in each iteration (Algorithm 2).  However, the iteration might very likely to cause a sharp decrease of other terms when the step size $\delta$ is very large. To this end, a relatively small step size is desirable to guarantee convergence.

\begin{algorithm}[ht]

    \caption{Algorithm to solve P6}

    \begin{algorithmic}[1]
        \STATEx  \textbf{Initialization}: Set the step size $\delta$ and the error tolerance $\epsilon$. Randomly initialize $\mathbf{f}_x^{(i)},  (i = 0)$ under the constant modulus constraint.
        \STATEx {\textbf{Step 1.}} Find the angle  $ \Psi_x^* $ that corresponds to the minimum level of power radiation, i.e.,
            \begin{align}
                 \Psi_x^* = \argmin_{ \Psi_x \in \mathcal{G}_x}     {\mathbf{f}_x^{(i)}}^H \mathbf{M}_{\Psi_x} \mathbf{f}_x^{(i)}
            \end{align}
        \STATEx  {\textbf{Step 2.}} Apply constant-modulus power iteration method  to update $\mathbf{f}_x$, i.e.,
        \begin{subequations}
        \begin{align}
        \mathbf{f}_{x, temp}^{(i)} & = \mathbf{f}_x^{(i)} +  \delta    \mathbf{M}_{\Psi_x}  \mathbf{f}_x^{(i)} \notag\\
        {\mathbf{f}_x^{(i+1)}} & = \mathbf{D} \mathbf{f}_{x, temp}^{(i)}  \notag
        \end{align}
        \end{subequations}
        where $ \mathbf{D}=  \diag \left\{ \frac{1}{|\mathbf{f}_{x, temp}^{(i)}(1)|},\;\cdots, \;   \frac{1}{|\mathbf{f}_{x, temp}^{(i)}(N_x)|} \right\}$.
        \STATEx
        Go to Step 1 until $\| \mathbf{f}_x^{(i)} - \mathbf{f}_x^{(i+1)} \| \leq \epsilon $, and set $i \leftarrow i+1$.
    \end{algorithmic}
\end{algorithm}

\begin{remark}{\rm
Due to the non-convexity of P6, the result of Algorithm 2 might be a local optimum. To reduce the chance of being trapped in local optimum, we need to run Algorithm 2 with random initial points for multiple times and choose the best result.
}
\end{remark}

\begin{remark}{\rm
In the LoS-dominant channel, the directional backscatter communications achieve significantly better performance than the non-directional backscatter communications. However, in the complex scattering environments with substantial reflection, reverberation,  multi-path, etc., the performance gain brought by directional beamforming degrades.
}
\end{remark}

\section{Numerical Results }

In this section, we present some numerical results to verify the effectiveness of our proposed APSK constellation design in programmable metasurface enabled backscatter communications.

\subsection{Numerical Study of Constellation Design }
The minimum Euclidean distance between different constellation points is a key performance indicator of the constellation, which determines its symbol error rate (SER), bit error rate (BER) and mutual information \cite{xiao2011globally}. To this end, we make comparisons of the minimum Euclidean distance between the conventional QAM/PSK constellations and our optimized APSK constellation. We firstly list the parameters for the optimized APSK as follows and plot the  constellations  in \figref{APSK}.
\begin{itemize}
\item
When the modulation order is $M = 8$, the  parameters for the optimal APSK are $L=2$, $N_1 = 1, N_2 = 7$, $r_1 = 0, r_2 = 1$, and $\omega_1 = 0, \omega_2 = 0.4488$;
\item
When the modulation order is $M = 16$, the parameters for the optimal APSK are $L=2$, $N_1 = 5, N_2 = 11$,  $r_1 = 0.4603, r_2 = 1$,  $\omega_1 = 0, \omega_2 = 0.0571$;
\item
When the modulation order is $M = 32$, the parameters for the optimal APSK are $L=3$, $N_1 = 5, N_2 = 10, N_3 = 17$, $ r_1 = 0.3068, r_2 = 0.6397, r_3 = 1$, $\omega_1 = 0, \omega_2 = 0.3142, \omega_3= 0.3326$;
\item
When the modulation order is $M = 64$, the parameters for the optimal APSK are $L=5$, $N_1 = 1, N_2 = 6, N_3 = 13, N_4 = 19, N_5 = 25$, $r_1 = 0, r_2 =  0.2446, r_3 = 0.5110, r_4 =  0.7555, r_5 = 1$, $\omega_1 = 0, \omega_2 = 0.5236, \omega_3= 0.5639, \omega_4= 0.5766, \omega_5= 0.5832$;
\end{itemize}
We present the comparisons of minimum Euclidean distance $d_{\min}$ between APSK, PSK, and QAM under amplitude constraint in Table \ref{AmCons}, from which we can see that the optimized APSK is superior to both PSK and QAM when $M = 8, 16, 32, 64$. QAM is a type of dense constellation that follows the structure of $\mathbf{Z}^2$ lattice \cite{wang2016signal}, and in power-constrained case, QAM usually achieves satisfying $d_{min}$ performance. For comparative purposes, we also list the $d_{\min}$ of the same QAM, PSK and APSK constellations under power constraint in Table \ref{PowerCons}. As can be seen that,  QAM outperforms APSK when $M=16, 32$ under power constraint. To conclude, Table \ref{AmCons} and  Table \ref{PowerCons} jointly indicate that our proposed design for APSK constellation is an efficient scheme under amplitude constraint.

\begin{table}[t] \centering
\centering
\caption{The comparison of $d_{min}$ for PSK, QAM and the optimal APSK under amplitude constraint}\label{AmCons}
\begin{tabular}{|l|*{7}{c|}}\hline
 \makebox[6.5em] {Modulation Order }
&\makebox[3.5em]{PSK} &\makebox[3.5em]{QAM}  &\makebox[4.5em]{APSK} \\\hline\hline
\qquad $M=8$ \;  & 0.7654  &  0.6325  &  0.8678  \\\hline
\qquad $M = 16$ \;  & 0.3902  &  0.4714  &  0.5411  \\\hline
\qquad  $M = 32$ \;  &  0.1960    & 0.3430  &  0.3606  \\\hline
\qquad  $M = 64$ \;  &  0.0981  &  0.2020  & 0.2446  \\\hline
\end{tabular}
\vspace{0.2cm}
\centering
\caption{The comparison of $d_{min}$ for PSK, QAM and the optimal APSK under power constraint}\label{PowerCons}
\begin{tabular}{|l|*{7}{c|}}\hline
 \makebox[6.5em] {Modulation Order }
&\makebox[3.5em]{PSK} &\makebox[3.5em]{QAM}  &\makebox[4.5em]{APSK} \\\hline\hline
\qquad $M=8$ \;  & 0.7654  &  0.8165  &  0.9277  \\\hline
\qquad $M = 16$ \;  & 0.3902  &  0.6325  &  0.6233  \\\hline
\qquad  $M = 32$ \;  &  0.1960    & 0.4472  &  0.4393  \\\hline
\qquad  $M = 64$ \;  &  0.0981  &  0.3086  & 0.3109  \\\hline
\end{tabular}
\end{table}

\begin{figure}

\begin{minipage}[!h]{0.48\linewidth}
\centering
\includegraphics[ width=1.1\textwidth]{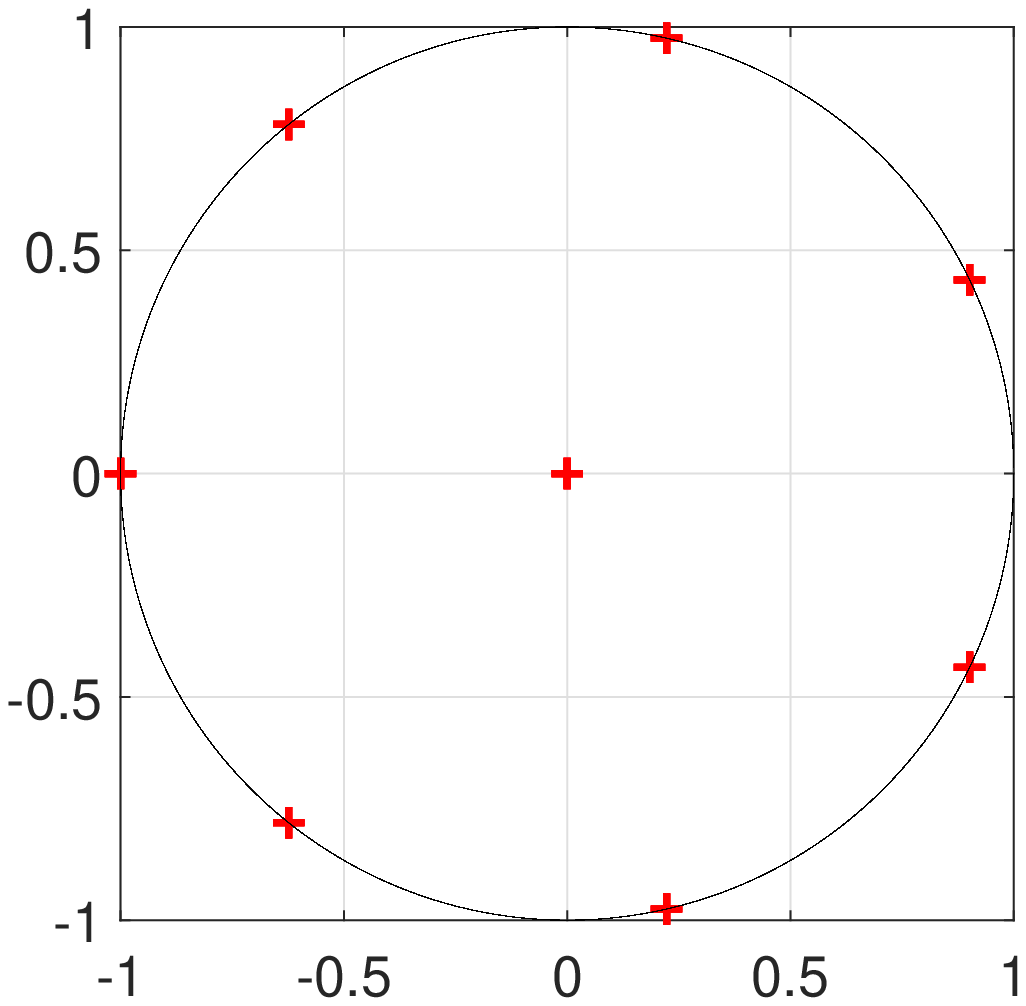}
\subcaption{$M=8$ }
\label{APSK1}
\end{minipage}
\begin{minipage}[!h]{0.48\linewidth}
\centering
\hspace{-.63cm}\includegraphics[width=1.1\textwidth]{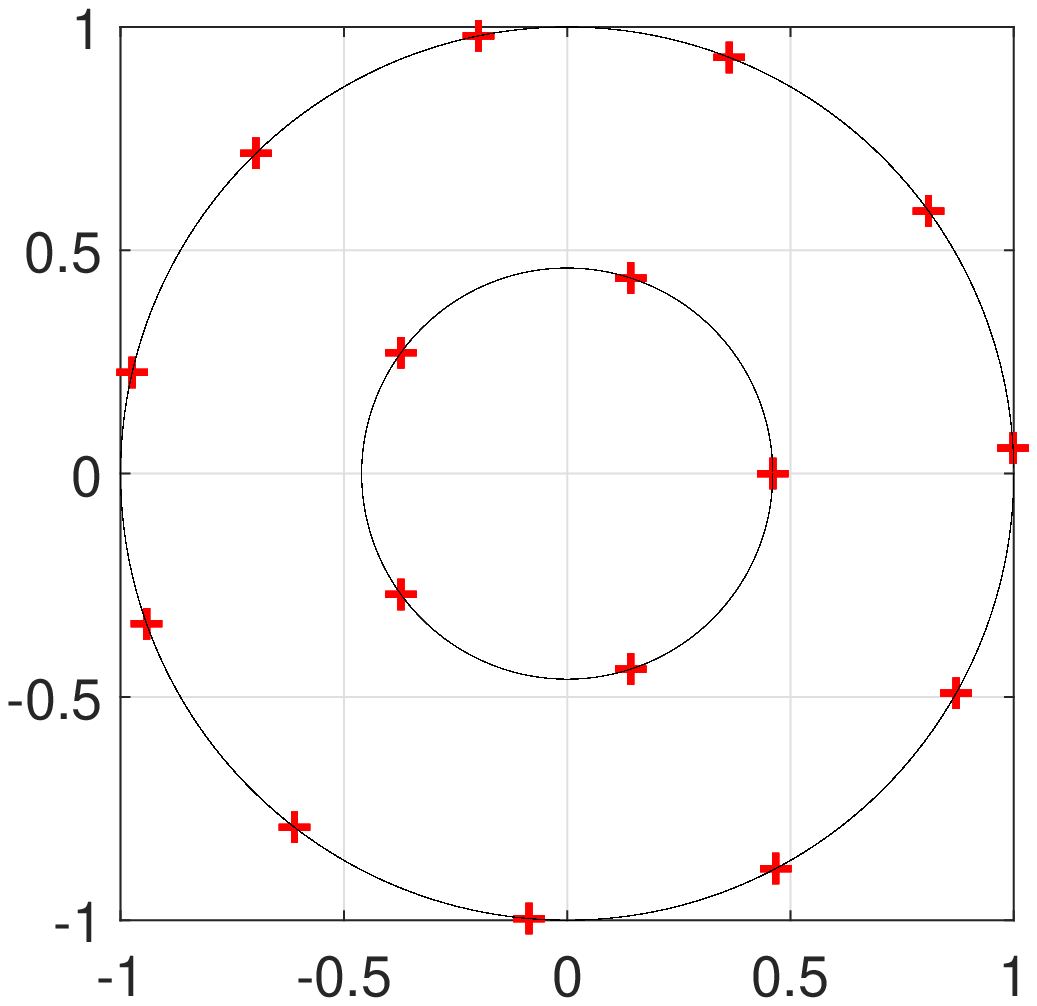}
\subcaption{$M=16$ }
\label{APSK2}
\end{minipage}
\begin{minipage}[!h]{0.48\linewidth}
\centering
\includegraphics[ width=1.1\textwidth]{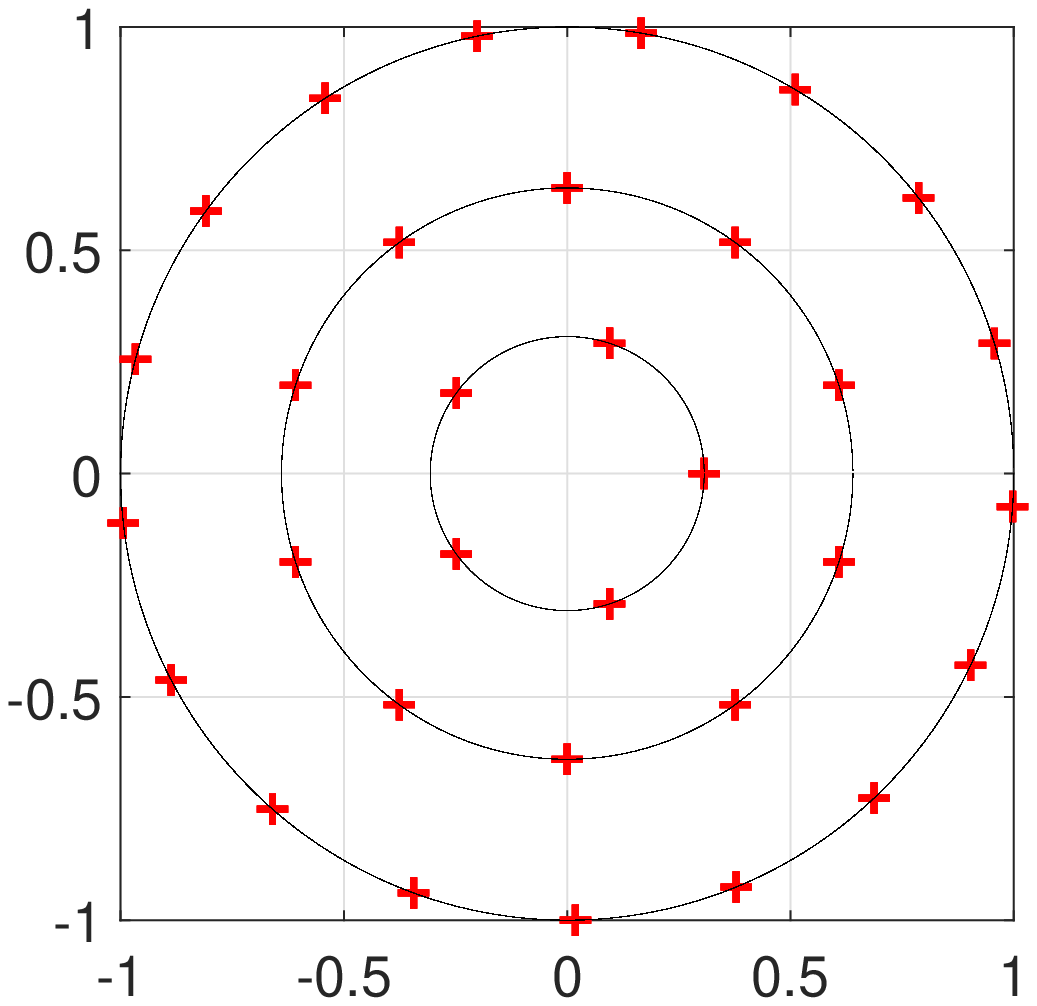}
\subcaption{$M=32$ }
\label{APSK1}
\end{minipage}
\hspace{-.3cm}\begin{minipage}[!h]{0.48\linewidth}
\centering
\includegraphics[width=1.1\textwidth]{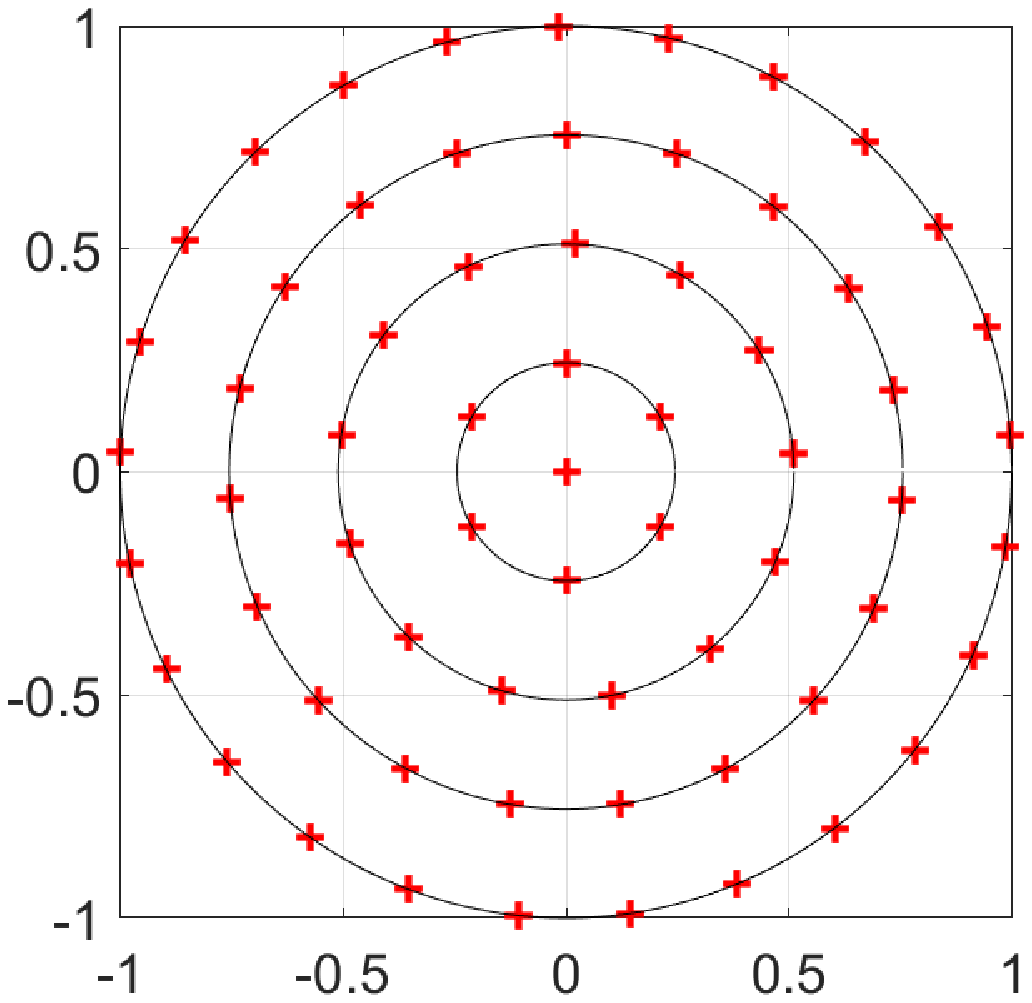}
\subcaption{$M=64$ }
\label{APSK2}
\end{minipage}
\caption{Constellation diagrams of the optimized APSK}
\label{APSK}
\end{figure}

In \figref{BER}, we study the BER of the optimized APSK in additive white Gaussian noise (AWGN) channel, where the $x$-axis represents $\frac{E_b}{N_0}$, i.e., energy per bit to noise power spectral density ratio, and the $y$-axis represents BER. From the figure, we can see that the optimized APSK achieves better BER performance than both QAM and PSK when the modulation order is $M=8, 16, 32, 64$. Specifically, the performance enhancement is approximately $1$dB when $M=8, 16$, $0.5$dB when $M=32$, and $1.5$dB when $M=64$. It indicates that when programmable metasurface enabled backscatter communications adopt high-order modulations, our proposed design consumes $26\%$, $26\%$, $12\%$, and $40\%$ less energy from the incident power source.

\subsection{Numerical Study of Reflection Pattern Design}

\begin{figure}[tp]{
\begin{center}{\includegraphics[ height=7cm]{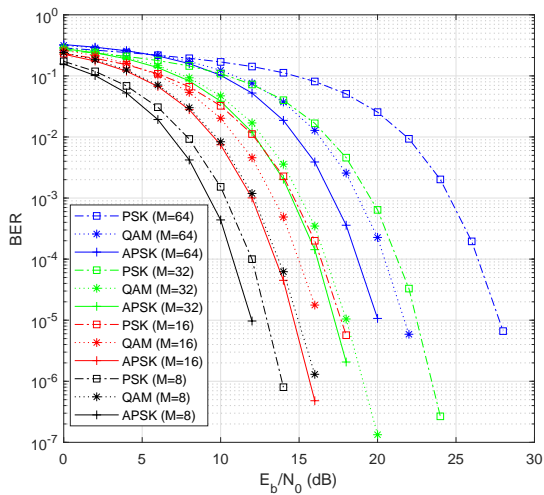}}
\caption{BER comparisons between PSK, QAM and APSK for programmable metasurface enabled backscatter communications}\label{BER}
\end{center}}
\end{figure}

\begin{figure}[tp]{
\begin{center}{\includegraphics[ height=5.5cm]{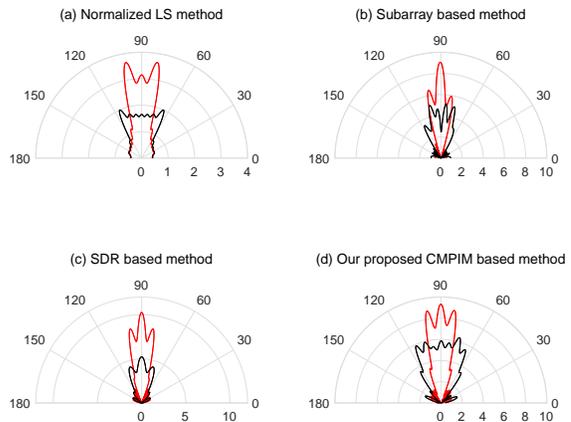}}
\vspace{0.2cm}
\caption{1-D Beam patterns of the component passive beamforming vectors (Black
curve corresponds to the component passive beamforming vector $\mathbf{f}_x$
with the angle range $[-0.5, 0.5)$, and red curve corresponds to the component passive beamforming vector $\mathbf{f}_y$ with the angle range $[-0.25, 0.25)$) } \label{BeamPattern}
\end{center}}
\end{figure}

{\em  Benchmark Schemes and Simulation Parameters:} To study the performance of our proposed CMPIM based method, we make a comparison with three benchmark schemes \cite{alkhateeb2014channel, wang2020optimal, xiao2016hierarchical}.
\begin{itemize}
\item In Benchmark 1 (termed as normalized LS method), the passive beamforming vector is derived through LS estimation of the ideal rectangular beam pattern, which is similar to \cite{alkhateeb2014channel}, and then normalized according to \eqref{Normalization}.
\item In Benchmark 2 (termed as SDR based method), the passive beamforming vector is derived by solving a max-min optimization problem using SDR technique, and different from the traditional sum power constraint case  \cite{wang2020optimal, luo2010semidefinite}, the generated Gaussian randomizations is normalized according to \eqref{Normalization}.
\item Benchmark 3 (termed as subarray based method) follows the design in \cite{xiao2016hierarchical}.
\end{itemize}
As the beamwidth of the subarray based method is confined to a few discrete values, we set the intended angle range in the numerical study as $\mathcal{D}_{\Psi} = [-0.5, 0.5) \times [-0.25, 0.25)$ (which corresponds to AoD range $\psi_x \in (60^{\degree}, 120^{\degree}], \psi_y \in (75.52^{\degree}, 104.48^{\degree}]$) for fairness in comparisons.  In addition, we set the number of metasurface units as $N = N_x \times N_y = 16 \times 16$ and
the inter-element spacing as half the wavelength.

\begin{figure}[tp]{
\begin{center}{\includegraphics[ height=6.5cm]{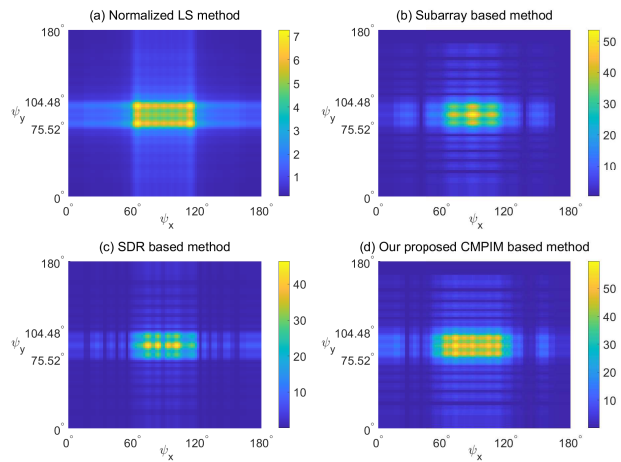}}
\caption{2-D beam patterns of the passive beamforming vector $\mathbf{f}_{x,y}$}\label{2Dpattern}
\end{center}}
\end{figure}

\begin{table}[htp] \centering
\noindent
\caption{Power Ratio and Ripple Factor} \setlength{\belowcaptionskip}{0cm} \label{T1}
 \begin{tabular} {|p{3.45cm}<{\centering}|p{1.6cm}<{\centering}|p{1.6cm}<{\centering}|}
\hline\hline
\backslashbox{Beam Pattern}{Performance}   & Ripple Factor    & Power Ratio \\ \hline
Normalized LS method &  0.1410    &   1.34\%    \\\hline
SDR based method \cite{wang2020optimal} &  0.3299  &  27.50\%    \\\hline
Subarray based method \cite{xiao2016hierarchical}  &  0.3184 & 42.78\%  \\\hline
CMPIM based method &  0.2259  &   81.45\%  \\\hline
\end{tabular}
\end{table}

In \figref{BeamPattern}, the 1-D beam patterns of the component passive beamforming vectors $\mathbf{f}_x$ and $\mathbf{f}_y$ generated by different methods are presented.  In \figref{2Dpattern}, the 2-D beam patterns of the passive beamforming vector $\mathbf{f}$ generated by different methods are presented. In \figref{BeamPattern} amplitude is proportional to the radial distance from the center point and in \figref{2Dpattern} amplitude is represented by color scale (For illustrative purposes, sub-figures are displayed with different scales).  Based on the figures, beam patterns generated by our method are more power efficient and more flat in passband. To quantitatively validate the our observations,  we derive the ripple factor and power ratio of the 2-D beam patterns in Table \ref{T1}, where ripple factor is defined in \eqref{Subeq4}, and power factor is defined as the ratio of $E_{ [-0.5, 0.5) \times [-0.25, 0.25)}$ to the maximum achievable reflection power in all directions, i.e., $E^{\mathbf{f}^H \mathbf{f} = N}_{[-1, 1) \times [-1, 1)} = 4N$. We can see that the normalized LS method achieves the smallest ripple factor, followed by our proposed CMPIM based method, while SDR based method and subarray based method are the worst in ripple factor and experience drastic fluctuation in the passband. It is noteworthy that, unlike the beam pattern design under sum power constraint in \cite{wang2020optimal}, ripple factor of SDR based method deteriorates significantly due to the amplitude constraint. As for power ratio, our proposed CMPIM based method is the most efficient in passive beamforming, which achieves $81.45\%$ of the maximum reflection power within intended angle range, subarray based method achieves $42.78 \%$ of the maximum reflection power, SDR based method achieves $27.50 \%$ of the maximum reflection power, and normalized LS method achieves merely $1.34 \%$ of the maximum reflection power. Although most of the reflection power falls into the passband for the three benchmark designs, their power ratios are still unsatisfying. It is because their generated passive beamforming vectors satisfy $\mathbf{f}^H\mathbf{f} << N$. Specifically, the power inefficiency of SDR based method and normalized LS method are caused by the amplitude normalization operation \eqref{Normalization}, and subarray based method is due to the deactivation of half of the metasurface units for the component passive beamforming vector $\mathbf{f}_y$ whose passband $[-0.5, 0.5)$. In a word, unlike traditional MIMO beamforming designs, the programmable metasurface enabled passive beamforming has to meet $|\mathbf{f}(i)| = 1, \;\forall n \in \{1, .., N\}$ to maximize the power efficiency of signal reflection.

\begin{figure}[tp]{
\begin{center}{\includegraphics[ height=7cm]{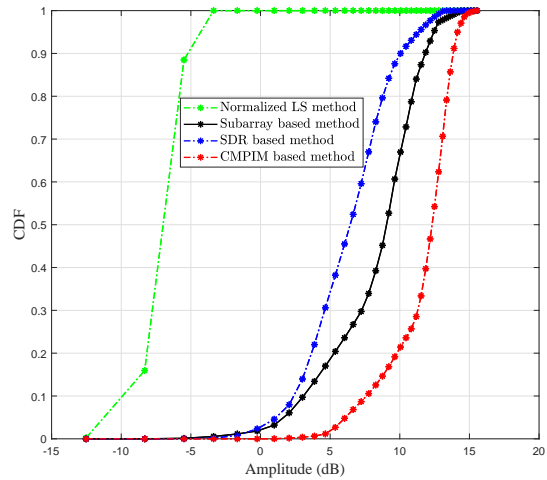}}
\vspace{0.2cm}
\caption{CDF of beam amplitude within the intended angel range} \label{CDF}
\end{center}}
\end{figure}

In \figref{CDF},  we numerically analyze the signal coverage of different reflection patterns within the intended angle range $[-0.5, 0.5) \times [-0.25, 0.25)$. The X-axis represents beam amplitude in dB, i.e., $20\log_{10} \frac{Amp}{10}$, and the Y-axis represents the cumulative distribution functions (CDF) of beam amplitude. The CDF curves are derived by sampling over the independent uniform distributions $\Psi_x \sim U(-0.5, 0.5), \Psi_y \sim U(-0.25, 0.25)$. From the figure, we can see that the beam amplitude  is primarily within the range  [-12.5dB, -4dB]  for normalized LS method, [-5dB, 13dB] for subarray based method, [-5dB, 14dB] for subarray based method, and [3dB, 15dB] for the proposed CMPIM based method.  It is noteworthy that the $8.5$dB amplitude span, $18$dB amplitude span, $19$dB amplitude span, and $12$dB amplitude span of the four methods are in accordance with their ripple factors in Table. \ref{T1}. The narrower amplitude span means the better performance stability.  Besides, we can also find that $80\%$ of the angles $(\Psi_x, \Psi_y)$ achieve greater than $10$dB amplitude in our proposed CMPIM based method, while in the best benchmark scheme, i.e., subarray based method, only $30\%$ of the angles achieve greater than $10$dB amplitude.

\subsection{Numerical Study of Directional Backscatter Communications}

\begin{figure}[tp]{
\begin{center}{\includegraphics[ height=7cm]{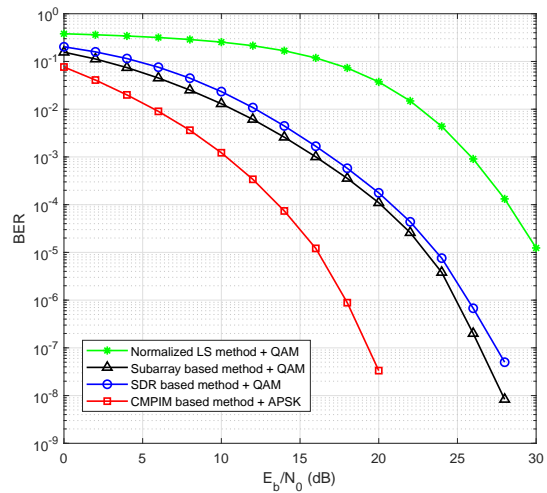}}
\caption{BER performance of directional backscatter communications}\label{BERdir}
\end{center}}
\end{figure}

In \figref{BERdir}, we study the BER performance of our proposed APSK design and reflection pattern design as an integral in directional backscatter communications. For comparison, we adopt three benchmark schemes by combining the traditional constellation design and beam pattern design, i.e., normalized LS method + QAM,  subarray based method + QAM, and SDR based method + QAM, and we set the modulation order as $64$ and receive antenna number as $N_r = 1$. The channel response is  $\mathbf{h} = \mathbf{h}_{LoS} + \mathbf{h}_{NLoS}$, where $\mathbf{h}_{LoS} = \delta \cdot \mathbf{v}(\Psi_x, \Psi_y)$ and $\mathbf{h}_{NLoS} \sim \mathcal{CN}(\mathbf{0}, \sigma^2 \mathbf{I}_{N})$. We further assume that  $\Psi_x \sim U(-0.5, 0.5), \Psi_y \sim U(-0.25, 0.25)$, the sum of reflection loss and propagation loss of each channel realization is $-10\log_{10} |\delta|^2  = 20$dB, and the strength of Non-Line-of-Sight (NLoS) component is $10$dB less than Line-of-Sight (LoS) component, i.e., $10\log_{10} \sigma^2 = 10\log_{10} |\delta|^2 - 10 = -30$dB. By averaging over 10000 channel realizations, we obtain the BER performance as in \figref{BERdir}. It can be seen that, when BER is $10^{-5}$, our proposed design is $7$dB better than the best benchmark scheme, i.e., subarray based method + QAM. Recall that in \figref{BER} the performance enhancement (in terms of BER) of 64-APSK over 64-QAM in AWGN is $1.5$dB, which indicates that the performance improvement (in terms of BER) contributed by our proposed reflection pattern design is $5.5$dB.

\section{Conclusion}
In  this paper, we have studied the design of power-efficient higher-order constellation and reflection pattern under the amplitude constraint.  For constellation design, we adopt the APSK constellation and propose to optimize the ring number, ring radius, and inter-ring phase difference of APSK.   For reflection pattern design, we propose to design the passive beamforming vector by solving a max-min optimization problem under constant modulus constraint, and a constant-modulus power iteration method is proposed to optimize the objective function in each iteration.  Numerical results show that the proposed APSK constellation design and reflection pattern design outperform the existing modulation and beam pattern design in programmable metasurface enabled backscatter communications.
\begin{appendices}

\section{Proof of Proposition 1}
Rewrite the term  $\frac{  k_l}{N_l}-  \frac{  k_{l+1} }{N_{l+1}} $  in \eqref{SetA} as
\begin{align}
 & \; \frac{  k_l}{N_l}-  \frac{  k_{l+1} }{N_{l+1}}  \notag \\
 =& \frac{1}{{\rm lcm}(N_l, N_{l+1})} \underbrace{\left( \frac{N_{l+1}k_l }{{\rm gcd}(N_l, N_{l+1})}  -\frac{N_{l}k_{l+1}}{{\rm gcd}(N_l, N_{l+1})}    \right)}_{\Gamma(k_l, k_{l+1})}
\end{align}
Without loss of generality, we set $\Gamma=c$, with $c$ being an integer and $ 0\leq c < N_l N_{l+1}$, and relax the range of $k_l$ and $k_{l+1}$, i.e.,
\begin{align} \label{Diophantine}
 \frac{N_{l+1}k_l }{{\rm gcd}(N_l, N_{l+1})}  -\frac{N_{l}k_{l+1}}{{\rm gcd}(N_l, N_{l+1})} = c
\end{align}
 Since the variables $k_l$, $k_{l+1}$ and  the parameters $N_{l}$, $N_{l+1}$ are integers,  \eqref{Diophantine} is a linear Diophantine equation. The  greatest common divisor of the coefficients $\frac{N_{l+1} }{{\rm gcd}(N_l, N_{l+1})}$ and $ -\frac{N_{l} }{{\rm gcd}(N_l, N_{l+1})}$ is $1$, and $c$ in \eqref{Diophantine}  is a multiple of  the greatest common divisor. According to the property of  linear Diophantine equation \cite{Diophantine},  \eqref{Diophantine} must have a solution $(\tilde{k}_l, \tilde{k}_{l+1})$. By setting
\begin{align}
\begin{split}
k_l &= {\rm mod}(\tilde{k}_l, N_l) \\
k_{l+1} &=  {\rm mod}(\tilde{k}_{l+1}, N_{l+1})
\end{split}
\end{align}
we have
\begin{align}
&2 \pi\left(\frac{ k_l }{N_l}-  \frac{ k_{l+1}  }{N_{l+1}}\right) = \frac{2 \pi \cdot c }{{\rm lcm}(N_l, N_{l+1})}+ \gamma \cdot 2\pi
\end{align}
where  $\gamma$  is an integer.

Therefore, the feasible region of the term $\cos  \phi_{l, l+1}$  is written as
\begin{align} \label{SetA}
 \Big\{ \cos \Big(\frac{2 \pi \cdot c }{{\rm lcm}(N_l, N_{l+1})} + \Delta\omega_{l, l+1} \Big),
  0\leq c < N_l N_{l+1}\Big\}
\end{align}
The set of \eqref{SetA}  consists of $N_l N_{l+1}$ discrete samplings of the cosine function with the sample interval $\frac{2 \pi }{{\rm lcm}(N_l, N_{l+1})}$. Due to the cyclic  property,  the range of $ \Delta\omega_{l, l+1}$ can be narrowed down to $ \Delta\omega_{l, l+1}\in \big[0, \frac{2\pi}{{\rm lcm}(N_l, N_{l+1})}\big)$.  When $\Delta\omega_{l, l+1}=0$, the largest element in the set of \eqref{SetA} is $\cos \big(\frac{ 0 }{{\rm lcm}(N_l, N_{l+1})}\big)=1$ and the second largest element is $\cos \big(\frac{2 \pi }{{\rm lcm}(N_l, N_{l+1})}\big)$. It is easy to find that when $\Delta\omega_{l, l+1}= \frac{ \pi }{{\rm lcm}(N_l, N_{l+1})} $, the largest element and second largest element become equal. Considering the cyclic property, the optimal phase difference  is represented as
\begin{align}
\Delta\omega^*_{l, l+1}= \frac{ (1+2\nu) \pi }{{\rm lcm}(N_l, N_{l+1})} , \quad {\rm \nu\; is \; an \; integer}
\end{align}
and the corresponding minimum angle is
\begin{align}
\phi^*_{l, l+1}  = \frac{\pi}{\rm lcm(N_l, N_{l+1})}
\end{align}

\section{Proof of Proposition 2}

When $[\Psi_x^L, \Psi_x^U) = [-1, 1)$, we have
\begin{subequations}
\begin{align}
\mathbf{V}_{\Psi_x}(\ell, \kappa) &  = \frac{j 2\sin(\pi (\ell-\kappa) )}{j (\ell-\kappa) \pi } = 0,\;\; \ell\neq \kappa   \\
\mathbf{V}_{\Psi_x}(\ell, \kappa) &= 2,\;\; \ell = \kappa
    \end{align}
\end{subequations}
Namely, $\mathbf{V}_{\Psi_x} = 2\mathbf{I}_{N_x}$. Similarly, when $[\Psi_y^L, \Psi_y^U) = [-1, 1)$, we have  $\mathbf{V}_{\Psi_y} = 2\mathbf{I}_{N_y}$.

Therefore,
\begin{align}
\mathbf{V}_{\mathcal{D}_{\Psi}} = \mathbf{V}_{\Psi_x} \otimes \mathbf{V}_{\Psi_y} = 4\mathbf{I}_{N_x N_y}
\end{align}
and
\begin{align}
P_{\mathcal{D}_{\Psi}} = 4 \mathbf{f}^H \mathbf{f}
\end{align}

\section{Proof of Proposition 3}

\begin{proof}
Firstly, we prove that
\begin{align} \label{Proof3}
|{\mathbf{f}_x^{(i+1)}}^H  \mathbf{M}_{\Psi_x} \mathbf{f}_x^{(i)}|  \geq {\mathbf{f}_x^{(i)}}^H  \mathbf{M}_{\Psi_x} \mathbf{f}_x^{(i)}
\end{align}
The left-hand side term \eqref{Proof3} satisfies
\begin{subequations}   \label{App2a}
\begin{align}
     & |{\mathbf{f}_x^{(i+1)}}^H  \mathbf{M}_{\Psi_x} \mathbf{f}_x^{(i)}|  \notag  \\
    =&   |{\mathbf{f}_x^{(i+1)}}^H (\mathbf{M}_{\Psi_x} + \frac{1}{\delta}\mathbf{I} )\mathbf{f}_x^{(i)} -  \frac{1}{\delta}{\mathbf{f}_x^{(i+1)}}^H  \mathbf{f}_x^{(i)}| \notag \\
    \geq &\frac{1}{\delta} |{\mathbf{f}_x^{(i+1)}}^H \mathbf{f}_{x, temp}^{(i)}|  -  \frac{1}{\delta}|{\mathbf{f}_x^{(i+1)}}^H  \mathbf{f}_x^{(i)}| \label{App2a1} \\
    \geq & {\frac{1}{\delta}  {\mathbf{f}_x^{(i+1)}}^H \mathbf{f}_{x, temp}^{(i)}}   -  \frac{N_x}{\delta} \label{App2a2} \\
    = & \frac{1}{\delta}  \| \mathbf{f}_{x, temp}^{(i)} \|_1    -  \frac{N_x}{\delta} \label{App2a3}
\end{align}
\end{subequations}
where $\| \cdot \|_1$ is $\ell_1$ norm.  \eqref{App2a1} is obtained due to triangle inequality,  \eqref{App2a2} is obtained due to $|{\mathbf{f}_x^{(i+1)}}^H  \mathbf{f}_x^{(i)}|\leq N_x$, and \eqref{App2a3} is obtained as $\mathbf{f}_x^{(i+1)}  = \left[ \frac{\mathbf{f}_{x, temp}^{(i)}(1)}{|\mathbf{f}_{x, temp}^{(i)}(1)|},\;\cdots, \;   \frac{\mathbf{f}_{x, temp}^{(i)}(N_x)}{|\mathbf{f}_{x, temp}^{(i)}(N_x)|} \right]^T $,

The right hand side term of \eqref{Proof3} satisfies
\begin{subequations} \label{App2b}
\begin{align}
&{\mathbf{f}_x^{(i)}}^H  \mathbf{M}_{\Psi_x} \mathbf{f}_x^{(i)} \notag \\
=& {\mathbf{f}_x^{(i)}}^H  (\mathbf{M}_{\Psi_x} + \frac{1}{\delta} \mathbf{I} ) \mathbf{f}_x^{(i)} - \frac{1}{\delta} {\mathbf{f}_x^{(i)}}^H {\mathbf{f}_x^{(i)}} \notag \\
=& \underbrace{\frac{1}{\delta}{\mathbf{f}_x^{(i)}}^H  \mathbf{f}_{x, temp}^{(i)}}_{real \; number} - \frac{N_x}{\delta} \label{App2b1} \\
\leq & \frac{1}{\delta}  \| \mathbf{f}_{x, temp}^{(i)} \|_1    -  \frac{N_x}{\delta} \label{App2b2}
\end{align}
\end{subequations}
Since $\mathbf{M}_{\Psi_x} + \frac{1}{\delta} \mathbf{I} $ is a positive semi-definite matrix,  $\frac{1}{\delta}{\mathbf{f}_x^{(i)}}^H  \mathbf{f}_{x, temp}^{(i)}$ is a real number.  The equality of \eqref{App2b2} holds if and only if $\mathbf{f}_x^{(i+1)} = \mathbf{f}_x^{(i)}$. Combining \eqref{App2a} and \eqref{App2b},  \eqref{Proof3} is obtained.

According to the Cauchy-Schwarz inequality, we have
\begin{align}
|{\mathbf{f}_x^{(i+1)}}^H  \mathbf{M}_{\Psi_x} \mathbf{f}_x^{(i)}|^2  \leq &  \left({\mathbf{f}_x^{(i)}}^H  \mathbf{M}_{\Psi_x} \mathbf{f}_x^{(i)} \right) \cdot  \notag \\
& \left( {\mathbf{f}_x^{(i+1)}}^H  \mathbf{M}_{\Psi_x} \mathbf{f}_x^{(i+1)} \right) \label{CS}
\end{align}
Based on  \eqref{Proof3} and \eqref{CS}, we have
\begin{align}
       {\mathbf{f}_x^{(i+1)}}^H \mathbf{M}_{\Psi_x}   \mathbf{f}_x^{(i+1)} \geq {\mathbf{f}_x^{(i)}}^H \mathbf{M}_{\Psi_x} \mathbf{f}_x^{(i)}
\end{align}
and the equality holds if and only if  $\mathbf{f}_x^{(i+1)}= \mathbf{f}_x^{(i)}$.

\end{proof}

\end{appendices}

\bibliographystyle{IEEEtran}%

\bibliography{bibfile}

\end{document}